\newcommand{\mypara}[1]{\smallskip\noindent\textbf{#1.}}  
\newtheorem{proposition}{Proposition}
\newtheorem{lemma}{Lemma}
\newtheorem{corollary}{Corollary}
\newtheorem{theorem}{Theorem}
\newtheorem{claim}{Claim}
\newtheorem{fact}{Fact}
\def\calE{{\mathcal{E}}}
\def\calN{{\mathcal{N}}}
\def\calS{{\mathcal{S}}}
\def\calI{{\mathcal{I}}}
\def\Pro{{\mathbb{P}}}
\def\Ex{{\mathbb{E}}}
\def\cD{{\mathcal{D}}}
\def\cG{{\mathcal{G}}}
\def\cR{{\mathcal{F}}}
\def\GS{\calS^\cG}
\def\GI{\calI^\cG}
\def\SS{\calS^{\cD}}
\def\SI{\calI^{\cD}}
\def\RS{\calS^\cR}
\def\RI{\calI^\cR}
\def\GSIR{{SIR^\cG}}
\def\SSIR{{SIR^{\cD}}}
\def\OPTG{OPT^\cG}
\def\OPTD{OPT^{\cD}}
\def\OPTS{OPT^{\cD}}
\def\OPTR{OPT^\cR}
\def\affG{a^\cG}
\def\affS{a^{\cD}}
\newcommand{\prob}[1]{\textsc{#1}}
\newcommand{\capacity}{\prob{LinkCapacity}}
\begin{document}

\begin{titlepage}

\title{Wireless Link Capacity under Shadowing and Fading}

\author{
  Magn\'us M. Halld\'orsson
  \qquad
  Tigran Tonoyan \\ \\
  ICE-TCS, School of Computer Science \\
  Reykjavik University \\
  \url{mmh@ru.is, ttonoyan@gmail.com}
}

\maketitle
\thispagestyle{empty}

\begin{abstract}
  We consider the following basic \emph{link capacity} (a.k.a., one-shot scheduling) problem in wireless networks: 
Given a set of communication links, find a maximum subset of links that can successfully transmit simultaneously.
Good performance guarantees are known only for deterministic models, such as the physical model with geometric (log-distance) pathloss. 
We treat this problem under stochastic shadowing under general distributions, bound the effects of shadowing on optimal capacity,
and derive constant approximation algorithms.
We also consider temporal fading under Rayleigh distribution, 
and show that it affects non-fading solutions only by a constant-factor.
These can be combined into a constant approximation link capacity algorithm under both time-invariant shadowing and
temporal fading.
\end{abstract}

\end{titlepage}

\section{Introduction}

Efficient use of networks requires attention to the \emph{scheduling} of the communication.
Successful reception of the intended signal requires attention to the \emph{interference} from other simultaneous transmissions,
and both are affected crucially by the vagaries in the \emph{propagation of signals} through media.
We aim to understand the fundamental capacity question of \emph{how much communication can coexist},
and the related algorithmic aspect of \emph{how to select large sets of successfully coexisting links}.
The hope is to capture the reality of signal propagation, while maintaining 
the fullest generality: arbitrary instances, and minimal distributional assumptions.

The basic property of radio-wave signals is that they attenuate as they travel. In free space, the attenuation (or ``pathloss'') grows with the \emph{square} of the distance. In any other setting, there are obstacles, walls, ceilings and/or the ground, 
in which the waves can go through complex transforms: reflection, refraction (or shadowing), scattering, and  diffraction. The signal received by a receiver is generally a combination of the multiple paths that it can travel, that are phase-shifted, resulting in patterns of constructive and destructive interference. The general term for variation of the received strength of the signal from the free-space expectation is \emph{fading}.

The fading of signals can be a function of time, location, frequency and other parameters, of which we primarily focus on the first two.
A distinction is often made between \emph{large-scale} fading, the effects of larger objects like buildings and trees, and
the \emph{small-scale} fading at the wavelength scale caused by multiple signal-paths.
We primarily distinguish between \emph{temporal fading}, that varies randomly within 
the time frame of communication, and \emph{shadowing}, that is viewed as 
invariant within the time horizon of consideration.
Temporal fading is typically experienced at the small-scale as a combination of multi-path propagation and movement or other environment changes.

The most common way of modeling true fading is \emph{stochastic fading}.
To each point in space-time, we associate a random variable drawn from a distribution. 
Typically, this is given by a distribution in the logarithmic dBm scale, so on an absolute scale the distributions are exponential.
There is a general understanding that \emph{log-normal shadowing} (LNS), which is Gaussian on the dBm scale, is the most faithful
approximation known of medium-large scale fading or even all atemporal fading \cite{zamalloa2007,chen2011,dezfouli2015modeling}. 
Empirical models often add variations depending on the environment, 
the heights of the sender/receiver from the ground, and whether there is a line-of-sight (e.g., \cite{mondal2015}).
The most prominent among the many models proposed for small-scale and temporal fading
are the ones of Rayleigh and Rice \cite{tse2005}, with the former (latter) best suited when there is (is no) line-of-sight, respectively.
The Rayleigh distribution mathematically captures the case when the signal is highly scattered and equally likely to arrive at any angle.
Though probabilistic models are known to be far from perfect, 
they are generally understood to be highly useful for providing insight into wireless systems,
and certainly more so than the free-space model alone.

Stochastic fading is the norm in generational models, such as for simulation purposes. 
For instance, LNS is built into the popular NS-3 simulator.
It is also commonly featured in stochastic analysis, e.g. \cite{Liu05eurasip}.
Worst-case analysis of algorithms has, however, nearly always involved deterministic models,
either the geometric free-space model, or extensions to more general metric spaces, e.g., (cite various).
One might expect such analysis to treat similarly arbitrary or ``any-case'' fading, but that quickly invokes the 
ugly specter of computational intractability \cite{GoussevskaiaHW14}.

\mypara{Problems and setting}
In the \emph{Link Capacity} problem, we are given a set $L$ of links, each of which is a sender-receiver pair of nodes on the plane.
We seek a maximum \emph{feasible} subset of links in $L$, where a set is feasible in the physical (or SINR) model if, for each link,
the strength of the signal at the receiver is $\beta$ times larger than total strengths of the interferences from the other links.
We consider \emph{arbitrary}/any-case positions of links, aiming for algorithms with good performance guarantees,
as well as characterizations of optimal solutions.

We treat Link Capacity in extensions of the standard physical model to stochastic fading.
We separate the fading into \emph{temporal} and \emph{atemporal} (or time-invariant) aspects, 
which we refer to as temporal fading and shadowing, respectively.
We generally assume independence across space in time-invariant distributions and across time in the temporal distributions.
This is a simplification, aimed to tackle most pronounced aspects; where possible, we relax the independence assumptions, sometimes allowing for arbitrary (worst-case) values.
Observe that the two forms can be arbitrarily correlated: the temporal results hold under \emph{arbitrary} time-invariant fading.

\mypara{Our results}
We give a comprehensive treatment of link capacity under stochastic fading models.
We give constant-factor approximation algorithms for both time-invariant and temporal stochastic models.
These are complementary and can be multiplexed into algorithms for both types of fading.

For (time-invariant) shadowing, we allow for essentially any reasonable stochastic distribution.
We show that shadowing never decreases the optimal link capacity (up to a constant factor), but can significantly increase it,
where the prototypical case is that of co-located links. We give algorithms for general instances, that achieve a constant factor approximation, assuming length diversity is constant.

For temporal fading, we treat arbitrary instances that can have \emph{arbitrary pathloss/shadowing}.
We show that algorithms that ignore the temporal fading given by Rayleigh distribution achieve a constant factor approximation.
The links can additionally involve weights and can be of arbitrary length distribution.

Besides the specific results obtained, our study leaves us with a few implications that may be of general 
utility for algorithm and protocol designers.
One such lesson is that to achieve good performance for shadowing,
\begin{quote}
\emph{\noindent
algorithms can concentrate on the signal strengths of the links},
\end{quote}
and can largely ignore the strength of the interference between the links. Another useful lesson is that
\begin{quote}
\emph{\noindent
algorithms can base decisions on time-invariant shadowing alone},
\end{quote}
since the temporal fading will even out.

These appear to be the first any-case analysis of scheduling problems in general stochastic models.
In particular, ours appears to be the first treatment of approximation algorithm for scheduling problems under shadowing or time-invariant fading.

\mypara{Related Work}
Gupta and Kumar \cite{guptakumar} introduced the physical model, which corresponds to our setting with no fading.
Their work spawned off a large number of studies on ``scaling laws'' regarding throughput capacity in instances with stochastic input distributions. 
First algorithms with performance guarantees in the physical model were given by Moscibroda and Wattenhofer \cite{MoscibrodaW06}.
Constant approximation for the Link Capacity problem were given for uniform power \cite{GoussevskaiaHW14}, 
linear power \cite{FKV09,Tigran11a}, fixed power assignments \cite{SODA11}, and arbitrary power control \cite{KesselheimSODA11}. 
This was extended to a distributed setting \cite{Dinitz2010,Eyjo11}, 
admission control in cognitive radio \cite{HM12cognitive},
link rates \cite{KesselheimESA12}, multiple channels \cite{us:podc14,Wan16},
spectrum auction \cite{HoeferK15},
changing spectrum availability \cite{dams2013sleeping}, 
and MIMO \cite{Wan14-mimo}.
NP-hardness was established in \cite{goussevskaiacomplexity}.
Numerous works on heuristics are known, as well as exponential time exact algorithms (e.g., \cite{shi2011maximizing}).

The Link Capacity problem has been fundamental to various other scheduling problems, appearing as a key subroutine for
shortest link schedule \cite{wan2011,GoussevskaiaHW14,us:talg12},
maximum multiflow \cite{Wan09,WiOpt14},
weighted link capacity \cite{wan2011,KesselheimESA12},
and capacity region stability \cite{CISS12,kesselheimStability}. 

Numerous experimental results have indicated that simplistic range-based models of wireless reception
are insufficient, e.g., \cite{ganesan2002,kotz2004experimental,zamalloa2007}. 
Significant experimental literature exists that lends support for stochastic models \cite{nikookar1993}, especially log-normal shadowing,
e.g., \cite{zamalloa2007,chen2011,dezfouli2015modeling}.
Analytic results on stochastic fading are generally coupled with stochastic assumptions on the inputs, 
such as point processes in stochastic geometry \cite{Liu05eurasip,haenggi2009,WWW15}. 
Most stochastic fading models though do not lend themselves to closed-form formulation; Rayleigh fading is a rare exception \cite{cardieri}. 
Log-normal shadowing has been shown to result in better connectivity \cite{stuedi2005connectivity,muetze2008understanding} and throughput capacity \cite{stuedi2007log}, but this may be artifact of the i.i.d.\ assumption \cite{agrawal2009correlated}.

The only work on Link Capacity with any-case instances in fading models is by Dams et al.\ \cite{dams2015}, 
who showed that temporal Rayleigh fading does not significantly affect the performance of {\capacity}
algorithms, incurring only a $O(\log^* n)$-factor increase in performance for link capacity algorithms.  We improve this
here to a constant factor. Rayleigh fading has also been considered in distributed algorithms for local broadcast \cite{wang2016}.

The non-geometric aspects of signal propagation have been modeled non-stochastically in various ways.
One simple mechanism is to vary the pathloss constant $\alpha$ \cite{guptakumar}.
A more general approach is to view the variation as deforming the plane into a general metric space \cite{FKRV09,SODA11}.
Also, the pairwise pathlosses can be obtained directly from measurements, inducing a quasi-metric space \cite{us:podc14}.
All of these, however, lead to very weak performance guarantees in the presence of the huge signal propagation
variations that are seen in practice (although some of that can be ameliorated by identifying parameters with better
behavior, like ``inductive independence'' \cite{HoeferK15}).

\section{Models and Formulations}

\subsection{Communication Model}

The main object of our consideration is a set $L$ of \emph{communication links}, numbered from $1$ to $n=|L|$. Each link $i\in L$ represents a unit-demand communication request between a sender node $s_i$ and a receiver node $r_i$, both point-size wireless nodes located on the plane. 

We assume the links all work in the same channel,  and all (sender) nodes use the same transmission power level $P$ (unless stated otherwise).
We consider the following basic question, which is called the {\capacity} problem: what is the maximum number of links in $L$ that can successfully communicate in a single time slot? We will refer to a set of links that can  successfully communicate in a single time slot as \emph{feasible}.

When a subset $S$ of links transmit at the same time, a given link $i$ will succeed if its signal (the power of the transmission of $s_i$ when measured at $r_i$) is larger than $\beta$ times the total (sum) interference from other transmissions, where $\beta\ge 1$ is a threshold parameter, and the interference of link $j$ on link $i$ is the power of transmission of $s_j$ when measured at $r_i$. We will denote by $\calS_i$ the received signal power of link $i$ and by $\calI_{ji}$ the interference of link $j$ on link $i$. In this notation, link $i$ transmits successfully if\footnote{In general, there should also be a Gaussian noise term in the success condition, which is omitted for simplicity of exposition. It may be noted that in expectation, the success of only a fraction of links will be affected by the noise.}
\begin{equation}\label{E:sir}
SIR(S,i)=\frac{\calS_i}{\sum_{j\in S\setminus i}\calI_{ji}} > \beta.
\end{equation}

\subsection{Geometric Path-Loss}

The \emph{Geometric Path-Loss model} or GPL for short,
defines the received signal strength between nodes $u$ and $v$ as $P/d(u,v)^\alpha$, where $P$ is the power used by the sender $u$, $\alpha>2$ is the \emph{path-loss exponent} and $d$ denotes the Euclidean distance. In particular,   the signal strength/power of a link $i$ and the interference of a link $j$ on link $i$ are, respectively, \[\GS_i=P/l_i^\alpha\text{ and }\GI_{ji}=P/d(s_j,r_i)^\alpha,\] 
where $l_i=d(s_i,r_i)$ denotes the \emph{length} of link $i$ and $d(s_j,r_i)$ is the distance from the sender node of link $j$ to the receiver node of link $i$.

If the links in a set $S$ transmit simultaneously, the formula determining the success of  the transmission on link $i$ is similar to (\ref{E:sir}), but we will use the slightly modified notation $\GSIR(S,i)>\beta$ to indicate that GPL model is considered.

\subsection{Shadowing}

One of the effects that GPL ignores (or models only by appropriate change of the exponent $\alpha$), is signal obstruction by objects, or \emph{shadowing}. In generic networks shadowing is often modeled by a \emph{Stochastic Shadowing model}, or SS for short, such as the \emph{Log-Normal Shadowing model}, or LNS for short. In this case, there is a parametrized probability distribution $\cD$, such that the signal strength $\SS_i$ of a link $i$ at $r_i$ is assumed to have been sampled from the distribution $\cD$ and $\Ex[\SS_i]=\GS_i$, and similarly, for any two links $i,j$, the interference $\SI_{ij}$ is sampled from $\cD$ and $\Ex[\SI_{ij}]=\GI_{ij}$.  We assume that 
signals and interferences do not change in time (due to shadowing), at least during the time period when {\capacity} needs to be solved. 
In this model too, signal reception is characterized by the signal to interference ratio, but we will use the notation $\SSIR(S,i) > \beta$ to indicate that SS model is considered.

We shall be assuming independence among the random variables. This may lead to artifacts that are contrary to experience.
It is nevertheless valuable to examine closely this case that might be considered the most extreme.

\subsection{Temporal Fading}

Another effect that is not described by the models above is the temporal variations in the signal,
due to a combination of movement (of either transceivers or people/objects in the environment) and the scattered multipath components of the signal.  We will concentrate on \emph{Rayleigh fading}, where the signal power $\RS_i$ is distributed according to an exponential distribution with mean $\calS_i$, i.e. $\Ex[\RS_i]=\calS_i$, and similarly, the interference power $\RI_{ij}$ is distributed according to an \emph{exponential distribution} with mean $\calI_{ij}$, i.e., $\Ex[\RI_{ij}]=\calI_{ij}$, where $\calS_i$ and $\calI_{ij}$ are the signal and interference values, not necessarily from SS or GPL.  Again, the success of transmission is described by the signal to interference ratio. Note, however, that in this case the success is probabilistic: the same set of links can be feasible in one time slot and non-feasible in another.

\subsection{Computational Aspects} There is a striking difference between GPL and shadowing on one side, and temporal fading on the other side, from the computational point of view. This difference stems from the spatial nature of GPL and shadowing, and the time-variant nature of temporal fading. In the former case, an algorithm can be assumed to have access to, e.g., the signal strengths of links, which could be obtained by measurements. This, however, is impossible or impractical under temporal fading, which forces the algorithms to be probabilistic and base the actions solely on the expected values of signal strengths (w.r.t.\ fading distribution), and the performance ratio of algorithms is measured accordingly (see Sec.~\ref{S:rayleigh} for details).

\subsection{Technical Preliminaries}
Throughout this text, by ``constants'' we will mean fixed values, independent of the network size and topology (e.g. distances). Some examples are parameters $\alpha,\beta$, and the constants under big O notation.
\paragraph{Affectance.} In order to describe feasibility of a set of links with arbitrary signal and interference values, we will use the notion of \emph{affectance}, which is more convenient than (but equivalent to) the signal to noise ratio. For two links $i,j$ we let $a_{i}(j)=\frac{1}{SIR(i,j)} = \frac{\calI_{ij}}{\calS_j}$ and extend this definition to subsets: If $S\subseteq L$ is a set of links, then $a_S(j)=\sum_{i\in S\setminus j}a_i(j)$ and $a_i(S)=\sum_{j\in S\setminus i} a_i(j)$. Then, a set $S$ of links is \emph{feasible} if and only if $a_S(j)< 1/\beta$ holds for every link $j\in S$. When considering a particular SS distribution $\cD$ or GPL we will use superscripts $\cD$ and $\cG$ respectively, as before. We will use the following result of~\cite{HB15}, which shows that feasibility is robust with respect to the threshold value $\beta$.
\begin{lemma}\label{L:robustness}
If a set $S$ of links and number $\beta'>0$ are such that $a_S(i)\le 1/\beta'$ for each link $i\in S$, then $S$ can be partitioned into at most $\lceil 2\beta/\beta' \rceil$ feasible subsets.
\end{lemma}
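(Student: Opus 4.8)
The plan is to recast the statement as a bounded-defect coloring problem and attack it greedily along a carefully chosen ordering of $S$. By the affectance characterization, a set $T$ is feasible exactly when $a_T(i)<1/\beta$ for every $i\in T$, so it suffices to color $S$ with $k:=\lceil 2\beta/\beta'\rceil$ colors so that every color class $C$ has $a_C(i)<1/\beta$ for all $i\in C$. (If $\beta'\ge 2\beta$ then $k=1$ and $S$ is already feasible, so one can assume $k\ge 2$.)

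First I would record an averaging fact that turns the one-sided hypothesis into a usable ordering. For any $S'\subseteq S$, $\sum_{i\in S'}a_i(S')=\sum_{j\in S'}a_{S'}(j)\le\sum_{j\in S'}a_S(j)\le |S'|/\beta'$, so $S'$ contains a link $v$ with $a_v(S')\le 1/\beta'$. Peeling off such a link repeatedly yields an ordering $u_1,\dots,u_n$ of $S$ with $a_{u_m}(\{u_{m+1},\dots,u_n\})\le 1/\beta'$ for every $m$; combined with the hypothesis $a_S(u_m)\le 1/\beta'$ this gives two-sided control along the order: bounded out-affectance to the right and bounded total in-affectance everywhere.

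Next I would process the links in the reverse order $u_n,u_{n-1},\dots,u_1$, assigning $u_m$ to a color minimizing $a_{S_t}(u_m)+a_{u_m}(S_t)$ over the current classes $S_t$ (whose members are all successors of $u_m$ in the peeling order). Since $\sum_t\bigl(a_{S_t}(u_m)+a_{u_m}(S_t)\bigr)=a_{\{u_{m+1},\dots,u_n\}}(u_m)+a_{u_m}(\{u_{m+1},\dots,u_n\})\le 2/\beta'$, the chosen color satisfies $a_{S_t}(u_m)+a_{u_m}(S_t)\le 2/(k\beta')\le 1/\beta$. The final feasibility check splits $a_C(u_m)$ into the contribution of class members placed later than $u_m$ (its successors, bounded directly by the greedy choice made at step $u_m$) and those placed earlier (its predecessors, bounded through the out-affectance bound baked into the order, or via the telescoping identity $\sum_{u\in C}a_u(C\cap\{u{+}1,\dots\})\le|C|/\beta'$).

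The main obstacle — and exactly where the constant, the factor $2$, and the ceiling get decided — is balancing these two contributions: the greedy step controls the in-affectance on $u_m$ from the successors in its class, but the predecessors placed afterward also dump in-affectance onto $u_m$, and that part is only crudely bounded by the peeling inequality. Tightening this is the step I expect to need the real work: options are maintaining a two-sided slack invariant on each class throughout the greedy, amortizing the predecessor contributions via the telescoping identity above, or separating out the "heavy" affectance pairs (each link has at most $\beta/\beta'$ affectors contributing $\ge 1/\beta$, so these induce a $(\beta/\beta')$-degenerate graph that can be pre-colored) and then randomizing over the light remainder, where Markov's inequality on $\sum_t a_{S_t}(i)\le 1/(k\beta')\le 1/(2\beta)$ supplies precisely the factor-$2$ slack. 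I would spend most of the care there, both to land exactly on $\lceil 2\beta/\beta'\rceil$ classes and to cover the regime $\beta'\le\beta$ uniformly.
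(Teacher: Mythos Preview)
The paper does not actually prove Lemma~\ref{L:robustness}; it is quoted verbatim from~\cite{HB15} and used as a black box. So there is no ``paper's own proof'' to compare against beyond that citation.

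As for your sketch: the setup is sound --- the double-counting identity, the peeling order, and the observation that along that order each link has symmetric affectance at most $2/\beta'$ to its successors are all correct, and they are exactly the ingredients one expects. But the proposal is explicitly unfinished at the point that matters. Your greedy step controls $a_{S_t}(u_m)+a_{u_m}(S_t)$ only with respect to the classmates of $u_m$ that were \emph{already placed} (the successors in the peeling order). It says nothing about the affectance later deposited on $u_m$ by links $u_{m'}$ with $m'<m$ that subsequently join the same class; each such contribution is bounded by $2/(k\beta')$ individually, but there is no bound on how many of them there are, so the per-link feasibility condition $a_C(u_m)<1/\beta$ does not follow. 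You see this yourself and call it ``the main obstacle,'' but none of the three options you list is carried out. In particular, the telescoping identity you mention, $\sum_{u\in C}a_u(C\cap\{\text{successors}\})\le |C|/\beta'$, is an \emph{average} bound over the class, not a pointwise one, so by itself it does not close the gap; and the heavy-pair-plus-randomization route, if it works at all, would almost certainly overshoot the constant $\lceil 2\beta/\beta'\rceil$ rather than land on it.

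In short: this is a plan, not a proof. The missing idea is precisely how to turn the one-sided (or order-restricted) $2/\beta'$ control into a pointwise $1/\beta$ bound within each color class --- that is the content of the cited result, and your proposal stops just before supplying it.
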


\paragraph{Smooth Shadowing Distributions.} We will use quantiles of an SS distribution. Consider an SS distribution $\cD$ and assume links use uniform power assignment. For a probability $p\in (0,1)$ and a link $i$ with $\Ex[\SS_i]=\GS_i$, let $\bar\calS^p_i$ denote the $1-p$-quantile, i.e., a value $x$ such that $\Pr[\SS_i > x]=p$.
For a given number $p\in (0,1)$, the distribution $\cD$ is called \emph{$p$-smooth}, if there is a constant $c>0$ such that $\bar\calS^p_i \ge c\cdot \GS_i$ holds for each link $i$.
Note that all major distributions used to model stochastic shadowing satisfy such a smoothness condition.

\section{Comparing Shadowing to GPL}

We start by comparing the optimal solutions of {\capacity} under SS and GPL for any given set $L$ of links. However, a particular instance drawn from an SS distribution $\cD$ is arguably not informative and can be hard to solve. Instead, we will be more interested in the gap between GPL optimum and a ``typical'' SS optimum, in the sense of expectation. 

We denote by $\OPTG(L)$ the size (number of links) of the optimal solution to {\capacity} for a set $L$ of links under GPL.  Similarly, we denote by $\OPTD(L)$ the size of the optimal solution to {\capacity} under SS model with distribution $\cD$. 
We assume that for every link $i$, the variables $\SS_i$ and $\SI_{ij}$ (for all $j$) are independent, unless specified otherwise.

 We will compare the expected value  $\Ex[\OPTD(L)]$  with $\OPTG(L)$, for a given set $L$ of links, where the expectation is over the distributions of  random variables  $\SS_i$ and $\SI_{ji}$ for all links $i,j$. In particular, we prove that the expected SS optimum is never worse than a constant factor of the GPL optimum. On the other hand, due to the presence of links with high signal strength that can appear as a result of shadowing, the capacity can considerably increase.

\subsection{SS Does not Decrease Capacity}

First, we show that $\Ex[\OPTS(L)] = \Omega(\OPTG(L))$, i.e., a typical optimum under SS is not worse than the optimum under GPL.

\begin{theorem}
Let $\cD$ be a $p$-smooth SS distribution with a constant $p>0$, and let $L$ be any set of links. Then $\Ex[\OPTS(L)]=\Omega(\OPTG(L))$.
\end{theorem}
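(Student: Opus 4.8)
The plan is to start from a GPL-optimal feasible set $S^*$, with $|S^*|=\OPTG(L)$, and show that in expectation a constant fraction of its links can be salvaged into a set that is feasible under $\cD$. Fix a link $i\in S^*$; GPL-feasibility of $S^*$ gives $\sum_{j\in S^*\setminus i}\GI_{ji}<\GS_i/\beta$. Under shadowing, link $i$ can be spoiled in two ways: its signal $\SS_i$ may come out weak, or the aggregate interference $\sum_{j\in S^*\setminus i}\SI_{ji}$ may come out large. For the first, $p$-smoothness of $\cD$ gives $\Pr[\SS_i\ge c\GS_i]\ge p$, since $c\GS_i\le\bar\calS^p_i$ and the tail function $x\mapsto\Pr[\SS_i>x]$ is non-increasing. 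For the second, I would combine $\Ex\bigl[\sum_{j\in S^*\setminus i}\SI_{ji}\bigr]=\sum_{j\in S^*\setminus i}\GI_{ji}<\GS_i/\beta$ with Markov's inequality to get $\Pr\bigl[\sum_{j\in S^*\setminus i}\SI_{ji}\le 2\GS_i/\beta\bigr]>1/2$. Since the shadowing variables are mutually independent, the event $W_i$ that \emph{both} $\SS_i\ge c\GS_i$ and $\sum_{j\in S^*\setminus i}\SI_{ji}\le 2\GS_i/\beta$ hold has probability at least $p/2$.

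Next I would set $T=\{i\in S^*: W_i\text{ holds}\}$, a random subset of $S^*$, so that $\Ex[|T|]\ge (p/2)\,|S^*|=(p/2)\,\OPTG(L)$ by linearity of expectation. For every outcome and every $i\in T$, using $T\subseteq S^*$ and the nonnegativity of interference,
\[
a^{\cD}_T(i)=\sum_{j\in T\setminus i}\frac{\SI_{ji}}{\SS_i}\le\frac{1}{c\GS_i}\sum_{j\in S^*\setminus i}\SI_{ji}\le\frac{2}{c\beta},
\]
so $T$ satisfies $a^{\cD}_T(i)\le 1/\beta'$ for all $i\in T$ with $\beta'=c\beta/2$ --- not necessarily feasible, but almost. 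Invoking Lemma~\ref{L:robustness}, $T$ partitions into at most $\lceil 2\beta/\beta'\rceil=\lceil 4/c\rceil$ feasible subsets, hence contains a feasible subset of size at least $|T|/\lceil 4/c\rceil$; thus $\OPTS(L)\ge|T|/\lceil 4/c\rceil$ for every outcome. Taking expectations gives $\Ex[\OPTS(L)]\ge\frac{p}{2\lceil 4/c\rceil}\,\OPTG(L)=\Omega(\OPTG(L))$, as $p$ and $c$ are constants.

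The main obstacle is conceptual rather than computational: one cannot hope for a single fixed subset of $S^*$ to stay feasible under $\cD$ with good probability, since the signal and interference values fluctuate across all links simultaneously. The way around this is to argue per link in expectation --- letting linearity of expectation absorb the dependence between links --- to accept a \emph{random} set $T$ that merely has constant-bounded affectance rather than genuine feasibility, and to rely on Lemma~\ref{L:robustness} to convert that into a true feasible set while losing only a constant factor. The remaining care is to keep the ``weak signal'' and ``heavy interference'' failure events independent, so that their favorable halves multiply and the $p/2$ lower bound on $\Pr[W_i]$ survives.
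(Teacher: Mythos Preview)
Your proposal is correct and follows essentially the same line as the paper's proof: start from a GPL-optimal set, use $p$-smoothness to lower-bound each signal, Markov's inequality to upper-bound each aggregate interference, independence to combine the two events with probability at least $p/2$, linearity of expectation to count the surviving links, and Lemma~\ref{L:robustness} to turn the resulting set of bounded affectance into a genuinely feasible subset. The only cosmetic difference is that you fold the GPL-feasibility bound $\sum_j\GI_{ji}<\GS_i/\beta$ into the Markov step directly, whereas the paper first bounds $\sum_j\SI_{ji}\le 2\sum_j\GI_{ji}$ and then invokes feasibility; the resulting constants are the same.
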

\begin{proof}
Let $S$ be a maximum cardinality subset of $L$ that is feasible under GPL, and let us fix a link $i\in S$. Recall that $\GS_i=\Ex[\SS_i]$ and, by additivity of expectation, $\sum_{j\in S}\GI_{ji}=\Ex[\sum_{j\in S}\SI_{ji}]$.
Also, by smoothness assumption, there are constants $c>0$ and $p\in (0,1)$ such that $\bar\calS^p_i \ge c\cdot \GS_i$ and $\Pr[\SS_i > \bar\calS^p_i]=p$. On the other hand, by Markov's inequality, 
\[
\Pr\left[\sum_{j\in S\setminus i}\SI_{ji} \le 2 \sum_{j\in S\setminus i}\GI_{ji}\right] \ge 1/2.
\]
 Recall that we assumed that  the random variable $\SS_i$ is independent from $\SI_{ji}$ for each $i,j\in L$. Thus, we have that with probability at least $p/2$, both $\SS_i > \bar\calS^p_i>c\GS_i$ and $\sum_{j\in S\setminus i}\SI_{ji} \le 2 \sum_{j\in S\setminus i}\GI_{ji}$ hold, implying that $\SS_i > \frac{\beta c}{2} \sum_{j\in S\setminus i}\SI_{ji}$. By additivity of expectation, it follows that the expected size of a subset $S''$ of $S$ with $\SS_i > \frac{\beta c}{2} \sum_{j\in S\setminus i}\SI_{ji}$ for each link $i$ is at least $\frac{p}{2}\cdot|S|$. On the other hand, such a set $S''$ can be partitioned into at most $4/c$ feasible (under SS) subsets, by Lemma~\ref{L:robustness}. This implies that $\Ex[\OPTS(L)] \ge \frac{pc}{8}\cdot \OPTG(L)$.
\end{proof}

In the particular case of Log-Normal Shadowing, even independence is not necessary for the result above to hold, as shown in th next theorem. 
 We use the standard notation $X\sim \ln\calN(\mu,\sigma^2)$ and $Y\sim \calN(\mu,\sigma^2)$ to denote log-normally and normally distributed random variables, respectively. In Log-Normal Shadowing model, we assume that for all links $i,j$, $\calS_i\sim \ln\calN(\mu_i, \sigma^2)$ and $\calI_{ij}\sim \ln\calN(\mu_{ij}, \sigma^2)$ for appropriate positive values $\mu_i,\mu_{ij}$ and $\sigma$. In particular, the second parameter $\sigma$ is constant.

A log-normally distributed variable $X\sim \ln\calN(\mu,\sigma^2)$ can be seen as  $X=e^{Z}$, where $Z\sim \calN(\mu,\sigma^2)$ is a normal random variable. We will use the fact that $\Ex[X]=e^{\mu+\sigma^2/2}$. We will also use the following basic fact.

\begin{fact}\label{F:gaussum}
If $X\sim \calN(\mu_{1},\sigma^2)$ and $Y\sim \calN(\mu_2,\sigma^2)$ are normal random variables, then $\Ex[e^{X-Y}]\le e^{\mu_1-\mu_2+2\sigma^2}$.
\end{fact}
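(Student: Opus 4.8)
The plan is to avoid any assumption about the joint distribution of $X$ and $Y$ (in particular about their correlation) by applying the Cauchy--Schwarz inequality, and then to reduce everything to the moment generating function of a single Gaussian. Concretely, I would write $\Ex[e^{X-Y}]=\Ex[e^{X}\cdot e^{-Y}]$ and bound this by $\sqrt{\Ex[e^{2X}]}\cdot\sqrt{\Ex[e^{-2Y}]}$.

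Next I would invoke the standard identity $\Ex[e^{tZ}]=e^{t\mu+t^2\sigma^2/2}$, valid for any $Z\sim\calN(\mu,\sigma^2)$ and any real $t$; this is just the Gaussian MGF, the same identity already used in the excerpt in the form $\Ex[X]=e^{\mu+\sigma^2/2}$ for $X\sim\ln\calN(\mu,\sigma^2)$ (the case $t=1$). Applying it with $t=2$ to $X\sim\calN(\mu_1,\sigma^2)$ gives $\Ex[e^{2X}]=e^{2\mu_1+2\sigma^2}$, and with $t=-2$ to $Y\sim\calN(\mu_2,\sigma^2)$ gives $\Ex[e^{-2Y}]=e^{-2\mu_2+2\sigma^2}$. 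Multiplying these and taking the square root yields exactly $e^{\mu_1-\mu_2+2\sigma^2}$, which is the claimed bound.

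There is essentially no hard step here; the only point worth flagging is the temptation to instead argue that $X-Y\sim\calN\big(\mu_1-\mu_2,\,2\sigma^2(1-\rho)\big)$ and apply the Gaussian MGF directly. That route works when $(X,Y)$ is jointly Gaussian, and since $\rho\ge -1$ forces the variance to be at most $4\sigma^2$ it even produces the same constant $2\sigma^2$ — but it requires joint normality plus the correlation bound, whereas the Cauchy--Schwarz argument uses only the two marginals and delivers the stated inequality with no slack, so I would present that one. (One could additionally remark that the bound is attained in the limit $\rho\to -1$, so the factor $2$ in $2\sigma^2$ is best possible, though the paper does not need this.)
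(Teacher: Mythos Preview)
Your proposal is correct and matches the paper's proof essentially line for line: the paper also applies Cauchy--Schwarz to $\Ex[e^{X}\cdot e^{-Y}]$ and then evaluates $\Ex[e^{2X}]$ and $\Ex[e^{-2Y}]$ via the log-normal expectation formula to obtain $e^{\mu_1-\mu_2+2\sigma^2}$. Your additional remark about the joint-Gaussian alternative and tightness at $\rho\to -1$ is extra commentary not present in the paper, but the core argument is identical.
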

\begin{proof}
By Cauchy-Schwartz, \[\Ex[e^{X-Y}]=\Ex[e^X\cdot e^{-Y}]\le \sqrt{\Ex[e^{2X}]\Ex[e^{-2Y}]}.\] Since $2X\sim \calN(2\mu_{1},4\sigma^2)$ and $-2Y\sim \calN(-2\mu_{2},4\sigma^2)$, using the formula for the expectation of a log-normal variable gives \[\Ex[e^{X-Y}]  \le \sqrt{e^{2\mu_1+2\sigma^2}\cdot e^{-2\mu_2+2\sigma^2}}=e^{\mu_1-\mu_2+2\sigma^2}.\]
\end{proof}

\begin{theorem}
 For any set $L$ of links under  Log-Normal shadowing $\cD$, $\Ex[\OPTS(L)]=\Omega(\OPTG(L))$, even if the signal and interference distributions are arbitrarily correlated.
\end{theorem}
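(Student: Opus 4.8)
The plan is to abandon the independence-driven argument of the previous theorem and instead bound the \emph{affectance ratios} directly, using Fact~\ref{F:gaussum} to absorb the unknown correlation into a constant factor. Let $S$ be a maximum feasible set under GPL, so that $a^\cG_S(j) = \sum_{i\in S\setminus j}\GI_{ij}/\GS_j < 1/\beta$ for every $j\in S$. The first step bounds a single affectance $a^\cD_i(j) = \SI_{ij}/\SS_j$ in expectation. Write $\SI_{ij} = e^{Z_{ij}}$ with $Z_{ij}\sim\calN(\mu_{ij},\sigma^2)$ and $\SS_j = e^{Z_j}$ with $Z_j\sim\calN(\mu_j,\sigma^2)$; since $\Ex[\SI_{ij}] = e^{\mu_{ij}+\sigma^2/2} = \GI_{ij}$ and likewise $e^{\mu_j+\sigma^2/2} = \GS_j$, we get $e^{\mu_{ij}-\mu_j} = \GI_{ij}/\GS_j$. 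Then Fact~\ref{F:gaussum} yields $\Ex[a^\cD_i(j)] = \Ex[e^{Z_{ij}-Z_j}] \le e^{\mu_{ij}-\mu_j+2\sigma^2} = e^{2\sigma^2}\,a^\cG_i(j)$, and crucially this holds for \emph{arbitrarily correlated} $Z_{ij}$ and $Z_j$, because the proof of Fact~\ref{F:gaussum} only uses Cauchy--Schwarz, not independence.

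Summing over $i\in S\setminus j$ and using linearity of expectation (also correlation-agnostic), $\Ex[a^\cD_S(j)] \le e^{2\sigma^2} a^\cG_S(j) < e^{2\sigma^2}/\beta$ for each $j\in S$. The second step extracts a large low-affectance subset exactly as in the previous proof: put $t := 2e^{2\sigma^2}/\beta$ and let $S'$ be the random set of $j\in S$ with $a^\cD_S(j)\le t$; Markov gives $\Pr[j\in S']\ge 1/2$, so $\Ex[|S'|]\ge|S|/2$. In every realization, each $j\in S'$ has $a^\cD_{S'}(j)\le a^\cD_S(j)\le t$, so Lemma~\ref{L:robustness} with $\beta' = \beta/(2e^{2\sigma^2})$ splits $S'$ into at most $\lceil 4e^{2\sigma^2}\rceil$ feasible (under SS) subsets, the largest of size $\ge |S'|/\lceil 4e^{2\sigma^2}\rceil$. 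Hence $\OPTS(L)\ge |S'|/\lceil 4e^{2\sigma^2}\rceil$ pointwise, and taking expectations $\Ex[\OPTS(L)] \ge |S|/(2\lceil 4e^{2\sigma^2}\rceil) = \Omega(\OPTG(L))$, since $\sigma$ is a constant.

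I expect the first step to be the only real obstacle: the event ``$\SS_j$ large and $\sum_i\SI_{ij}$ small'' genuinely requires independence, so it must be replaced by something that survives arbitrary coupling. The fix is to reason about the single ratio $\SI_{ij}/\SS_j$ — whose logarithm is a difference of two Gaussians with equal variance — and let Cauchy--Schwarz (Fact~\ref{F:gaussum}) pay an $e^{O(\sigma^2)}$ penalty for the covariance; since $\sigma$ is a fixed model parameter this is a constant, and the remaining Markov/linearity/Lemma~\ref{L:robustness} steps are routine.
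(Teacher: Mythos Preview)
Your proposal is correct and follows essentially the same route as the paper: bound each pairwise affectance ratio $\Ex[e^{Z_{ij}-Z_j}]\le e^{2\sigma^2}\affG_i(j)$ via Fact~\ref{F:gaussum} (whose Cauchy--Schwarz proof is correlation-agnostic), sum and apply Markov to keep half the links with bounded $\affS_S$-affectance, then use Lemma~\ref{L:robustness} to extract a feasible subset. The paper's argument is identical up to notation and the exact constants.
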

\begin{proof}
Let $S\subseteq L$ be a feasible subset of $L$ under GPL. It is enough to show that the expected size of an optimal feasible subset of $S$ under LNS is $\Omega(|S|)$. Consider an arbitrary link $i\in S$. The affectance of $i$ under LNS is:
\[
\affS_S(i)=\sum_{j\in S\setminus\{i\}} \frac{\SI_{ji}}{\SS_i} = \sum_{j\in S\setminus\{i\}} \frac{e^{Z_{ji}}}{e^{Z_{i}}},
\]
where log-normal random variables $e^{Z_i}$ and $e^{Z_{ji}}$ represent the signal of link $i$ and interference caused by link $j$, respectively. Recall that  
$\GI_{ji}=\Ex[e^{Z_{ji}}]$ and $\GS_i=E[e^{Z_i}]$. Let us denote $\mu_i=E[Z_i]$ and $\mu_{ji}=E[Z_{ji}]$ and note that the variables $Z_i,Z_{ji}$ have variance $\sigma^2$. Hence, using the expectation formula for log-normal variables, we can observe  that $\GS_i = e^{\mu_i + \sigma^2/2}$ and $\GI_{ji}=e^{\mu_{ji} + \sigma^2/2}$. Using this observation together with Fact~\ref{F:gaussum}, we obtain that for every pair $i,j$, 
\begin{align*}
\Ex[\affS_j(i)]&=\Ex[e^{Z_{ji}-Z_i}]\le e^{\mu_{ji} - \mu_i + 2\sigma^2}=\frac{e^{\mu_{ji} + \sigma^2/2}}{e^{\mu_i + \sigma^2/2}}\cdot e^{2\sigma^2}\\
&=\frac{\GI_{ji}}{\GS_i}\cdot e^{2\sigma^2}=e^{2\sigma^2}\affG_j(i).
\end{align*}
Using the fact that $S$ is feasible, and linearity of expectation, we have, for every $i\in S$, that $\Ex[\affS_S(i)]\le e^{2\sigma^2}\cdot \affG_S(i)\le e^{2\sigma^2}/\beta$.
Thus, using  Markov's inequality, we obtain that  
\[
\Pro[\affS_S(i) > 2e^{2\sigma^2}/\beta] <\beta \cdot \Ex[\affS_S(i)]/2e^{2\sigma^2}< 1/2.
\]
The latter implies that the expected number of links $i\in S$ with $\affS_S(i) > 2e^{2\sigma^2}/\beta$ is less than $|S|/2$. It remains to note that by Lemma~\ref{L:robustness}, a $1/(4e^{2\sigma^2})$-th fraction of the remaining set of links will be feasible under LNS, i.e., we will have that $\affS_S(i) < 1/\beta$ for those links $i$.
\end{proof}

\subsection{SS Can Increase Capacity}

Next we show that, perhaps surprisingly, there are instances $L$ for which $\Ex[\OPTS(L)] \gg \OPTG(L)$: the typical optima under SS can be much better than the optimum under GPL. 
The intuition is that shadowing will create many links with higher signal strength than the expectation, which will be the main contributors to the increase in capacity.

In the remainder of this section we consider a set $L$ of links of the same length $l_i=\ell$, and assume that all sender nodes are located at one point (for all) and all receivers are located at another point. We call such links \emph{co-located}. Note that under GPL, any feasible subset of co-located links contains a single link. We show below that under SS, the capacity can significantly increase.
Let us fix a $p$-smooth shadowing distribution $\cD$ for a constant $p>0$. Since $L$ is a set of co-located links of length $\ell$, we have that for each $i,j\in L$, $\GS_i=\GS_j=\GI_{ij}=\GI_{ji}= \frac{P}{\ell^\alpha}$ which we denote $\bar\calS$ for short. 

In a dense or co-located set of links, the only hope for an increase in capacity are links of  signal strength higher than $\bar\calS$. Intuitively, if there is a feasible subset $S\subseteq L$ made of links of strength $k \bar\calS$, then $|S|=O(k)$, since the total interference on each link is likely to be $k\bar\calS$. The following definition essentially captures the maximum size of such a set of strong links there can be in $L$.  

For each link $i$, denote $f(t)=\Pr[\SS_i>t \bar\calS]$ the probability that link $i$ has signal strength at least $t$ times what is expected. For each  integer $n>0$, there is a maximal value $g'_n\ge 0$ such that $f(g'_n) \in \left[\frac{g'_n}{n}, \frac{2g'_n}{n}\right]$, because $f(t)$ is a non-increasing function of $t$, and $f(t)\rightarrow 0$ when $t\rightarrow \infty$. We will use the following slightly different definition: $g_n=\max\{1, g'_n\}$. For the case of log-normal distribution, $g_n=\Theta(e^{2\sigma \sqrt{\ln n}})$, as shown in Cor.~\ref{C:lognormal} below.

The following lemma (Lemma~\ref{L:upperbound}) gives an upper bound on the SS optimum, when the signal strengths of links are fixed and there are few links with ``strong signal''. Essentially, it indicates that the main contribution to the capacity is by links with strong signal, and interestingly, using power control cannot change this.

We will use the following result from linear algebra.
\begin{fact}\label{F:eigen}\cite{Schwenk86,Kolo93}
Let 	$A=(a_{ij})$ be an $n\times n$ non-negative real matrix. Then
$r(A) \ge \frac{1}{n}\sum_{i,j}\sqrt{a_{ij}a_{ji}}$, where $r(A)$ is the largest eigenvalue of $A$.
\end{fact}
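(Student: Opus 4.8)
The plan is to establish Fact~\ref{F:eigen} by passing to the \emph{geometric-mean symmetrization} of $A$. Define the non-negative \emph{symmetric} matrix $B=(b_{ij})$ by $b_{ij}:=\sqrt{a_{ij}a_{ji}}$. I would reduce the Fact to two claims: (i) $r(B)\ge \frac1n\sum_{i,j}\sqrt{a_{ij}a_{ji}}$, and (ii) $r(A)\ge r(B)$. Chaining these gives $r(A)\ge r(B)\ge \frac1n\sum_{i,j}\sqrt{a_{ij}a_{ji}}$, which is exactly the claim. Claim (i) is immediate: since $B$ is real symmetric, $r(B)=\max_{x\ne 0}\frac{x^{T}Bx}{x^{T}x}$ by Rayleigh--Ritz, and substituting the all-ones vector $x=\mathbf 1$ gives $r(B)\ge \frac{\mathbf 1^{T}B\mathbf 1}{n}=\frac1n\sum_{i,j}b_{ij}$.

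The real content is claim (ii) --- that geometric-mean symmetrization cannot increase the spectral radius --- and I would prove it by comparing traces of even powers. Two elementary spectral bounds are used: for the symmetric $B$ the eigenvalues are real, so $\operatorname{tr}(B^{2k})=\sum_i\lambda_i(B)^{2k}\ge r(B)^{2k}$; and for any matrix $\operatorname{tr}(A^{2k})=\sum_i\lambda_i(A)^{2k}$ has absolute value at most $\sum_i|\lambda_i(A)|^{2k}\le n\,r(A)^{2k}$, while $\operatorname{tr}(A^{2k})\ge 0$ since $A$ is non-negative. Hence it suffices to prove the trace domination $\operatorname{tr}(A^{2k})\ge \operatorname{tr}(B^{2k})$ for all $k\ge1$: combining the three inequalities yields $r(B)^{2k}\le n\,r(A)^{2k}$, and letting $k\to\infty$ gives $r(A)\ge r(B)$.

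The trace domination itself follows from a closed-walk expansion plus AM--GM. Writing $\operatorname{tr}(M^{2k})=\sum_{W}\prod_{(u,v)\in W}m_{uv}$, the sum running over closed walks $W=(i_0,i_1,\dots,i_{2k-1},i_0)$ of length $2k$, note that walk reversal $W\mapsto W^{R}=(i_0,i_{2k-1},\dots,i_1,i_0)$ is an involution on the set of such walks, and that the $A$-weight of $W$ times the $A$-weight of $W^{R}$ equals $\prod_{(u,v)\in W}a_{uv}a_{vu}=\big(\prod_{(u,v)\in W}b_{uv}\big)^{2}$. Therefore
\[
\operatorname{tr}(A^{2k})=\tfrac12\sum_{W}\Big(\prod_{(u,v)\in W}a_{uv}+\prod_{(u,v)\in W}a_{vu}\Big)\ \ge\ \sum_{W}\prod_{(u,v)\in W}b_{uv}=\operatorname{tr}(B^{2k}),
\]
where the middle step uses that reversal is a bijection to rewrite the sum symmetrically and then applies AM--GM walk-by-walk. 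The main obstacle is exactly this bookkeeping: one must check that reversal really is a bijection on \emph{all} length-$2k$ closed walks (including degenerate ones of weight $0$) so nothing is double counted, and that it is the \emph{even} exponent that makes the two elementary spectral bounds valid. If a shorter route is wanted, claim (ii) is precisely the theorem of Schwenk~\cite{Schwenk86} (see also \cite{Kolo93}) and may simply be invoked, the walk argument above being in essence its proof.
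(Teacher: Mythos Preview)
The paper does not actually prove Fact~\ref{F:eigen}; it is stated as a cited result from \cite{Schwenk86,Kolo93} and used as a black box in the proof of Lemma~\ref{L:upperbound}. So there is no ``paper's own proof'' to compare against.

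Your proposed proof is correct and is essentially the standard argument behind the Schwenk/Kolotilina inequality. The decomposition into (i) the Rayleigh--Ritz bound for the symmetric $B$ and (ii) the spectral-radius domination $r(A)\ge r(B)$ is exactly how the cited references proceed, and your walk-reversal/AM--GM derivation of the trace inequality $\operatorname{tr}(A^{2k})\ge\operatorname{tr}(B^{2k})$ is the clean combinatorial core. Two small remarks worth making explicit: in step (i) you are using that for a \emph{non-negative} symmetric matrix the spectral radius coincides with the top eigenvalue (Perron--Frobenius), so that Rayleigh--Ritz indeed yields $r(B)$ and not merely $\lambda_{\max}(B)$; and in step (ii) the inequality $\operatorname{tr}(A^{2k})\le n\,r(A)^{2k}$ likewise relies on $r(A)$ being the spectral radius, which again is Perron--Frobenius for non-negative $A$. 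With those justifications in place the argument is complete.
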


\begin{lemma}\label{L:upperbound}
Let $L$ be a set of co-located links. Assume that the signal strengths $\calS_i$ are fixed, but interferences $\SI_{ij}$ are drawn from  a $p$-smooth SS distribution $\cD$, for a number $p>0$. For a number $s > 0$, assume that there are at most $m$ links $i\in S$ with $\calS_i \ge s\cdot \bar\calS$, where $m>\frac{C\max\{s, \log n\}}{ p^2}$ for a large enough constant $C>0$. Then $\OPTS(L) < 4m$ holds w.h.p. with respect to interference distributions, even when links use power control.
\end{lemma}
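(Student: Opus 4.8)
\textbf{Proof proposal for Lemma~\ref{L:upperbound}.}

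The plan is to argue that any feasible set $S$ under SS with power control cannot be much larger than the number $m$ of strong-signal links. First I would split $S$ into the ``strong'' links $S_{\mathrm{strong}}=\{i\in S:\calS_i\ge s\bar\calS\}$ (of which there are at most $m$) and the ``weak'' links $S_{\mathrm{weak}}=S\setminus S_{\mathrm{strong}}$. It suffices to show that $|S_{\mathrm{weak}}|< 3m$ w.h.p., uniformly over all choices of feasible $S$ and all power assignments; this is where the eigenvalue bound (Fact~\ref{F:eigen}) and the smoothness of $\cD$ come in. The key point is that feasibility with power control is equivalent to a spectral condition: if $S$ is feasible under power control, then the normalized affectance matrix $M$ with entries $M_{ij}=\calI_{ij}/(\beta\calS_j)$ for $i\neq j$ (and $0$ on the diagonal) has spectral radius $r(M)<1$ --- this is the standard Perron--Frobenius characterization of SINR-feasibility under arbitrary power control. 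Hence by Fact~\ref{F:eigen}, $\frac{1}{|S|}\sum_{i\neq j\in S}\sqrt{M_{ij}M_{ji}} \le r(M) < 1$, i.e. $\sum_{i\neq j\in S}\sqrt{\calI_{ij}\calI_{ji}} < \beta^2|S|\cdot\sqrt{\calS_i\calS_j}$ --- wait, more precisely $\sum_{i\neq j}\sqrt{\calI_{ij}\calI_{ji}/(\beta^2\calS_i\calS_j)}<|S|$.

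Next I would lower-bound the left-hand side for the pairs inside $S_{\mathrm{weak}}$, where $\calS_i,\calS_j< s\bar\calS$. For such a pair, $\sqrt{\calI_{ij}\calI_{ji}}/(\beta\sqrt{\calS_i\calS_j}) > \sqrt{\calI_{ij}\calI_{ji}}/(\beta s\bar\calS)$. Now I invoke $p$-smoothness: for each ordered pair $(i,j)$ of co-located links, $\Pr[\SI_{ij}> c\bar\calS]\ge p$, so $\SI_{ij}$ is ``typically'' at least $c\bar\calS$. Call a pair $\{i,j\}$ \emph{good} if both $\SI_{ij}> c\bar\calS$ and $\SI_{ji}> c\bar\calS$; by independence across the interference variables this holds with probability at least $p^2$. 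If I could show that w.h.p. a constant fraction (say $\ge p^2/2$) of the $\binom{|S_{\mathrm{weak}}|}{2}$ pairs inside any fixed large $S_{\mathrm{weak}}$ are good, then the spectral inequality would force
\[
|S| > \sum_{\{i,j\}\subseteq S_{\mathrm{weak}},\text{ good}} \frac{2\sqrt{\SI_{ij}\SI_{ji}}}{\beta s\bar\calS} \ge \frac{p^2}{2}\binom{|S_{\mathrm{weak}}|}{2}\cdot\frac{2c}{\beta s} \gtrsim \frac{p^2 c}{\beta s}|S_{\mathrm{weak}}|^2,
\]
and since $|S|\le |S_{\mathrm{weak}}|+m$, this yields $|S_{\mathrm{weak}}| = O(s/p^2)$, hence $|S_{\mathrm{weak}}|<3m$ once $m$ exceeds the stated threshold $C\max\{s,\log n\}/p^2$.

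The main obstacle --- and the reason the $\log n$ term and the ``w.h.p.'' appear in the hypothesis --- is that $S$ is \emph{not} fixed in advance: the adversary may choose $S$ and the power assignment after seeing the realized interferences, so I cannot directly apply a Chernoff bound to one fixed set. To handle this I would take a union bound over all candidate subsets. The cleanest route: fix a threshold $k_0 = \lceil C' s/p^2\rceil$ and show that w.h.p.\ \emph{every} subset $T\subseteq L$ of weak-eligible links with $|T|=k_0$ contains at least, say, a $p^2/2$-fraction of good pairs; since there are at most $2^n$ (or $\binom{n}{k_0}$) such subsets, and for each fixed $T$ the number of good pairs is a sum of pairwise-independent-enough indicators concentrated around $p^2\binom{k_0}{2}$ with deviation probability $e^{-\Omega(p^2 k_0^2)}$ (using a Chernoff/Azuma bound on the independent interference variables, treating each of the $\le k_0$ variables' influence), the union bound succeeds as long as $p^2 k_0^2 \gg n\log 2$, i.e.\ $k_0\gg \sqrt{n}/p$ --- which is too weak.

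A better approach avoids enumerating subsets: argue \emph{monotonically}. If $S_{\mathrm{weak}}$ were large, then in particular the induced interference graph on the weak links, with an edge for each good pair, is dense; but I only need one structural fact --- that no set of $k_0$ weak links can be ``interference-free'' in the sense that too many of its interference pairs are simultaneously below $c\bar\calS$. Equivalently, define for each ordered pair the bad event $B_{ij}=\{\SI_{ij}\le c\bar\calS\}$ with $\Pr[B_{ij}]\le 1-p$; I want: w.h.p.\ there is no set $T$ of $k_0$ links and no way to pick, for each unordered pair in $T$, one of its two ordered directions, such that all chosen directed pairs are bad. The number of bad ordered pairs overall is $\approx (1-p)n^2$ in expectation and concentrated; a set $T$ that is ``too good for the adversary'' would need $\binom{k_0}{2}$ bad pairs localized on $k_0$ vertices. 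Counting: the probability that a \emph{fixed} collection of $\binom{k_0}{2}$ specified ordered pairs are all bad is $\le (1-p)^{\binom{k_0}{2}}$ by independence, and the number of ways to choose $T$ and the directions is $\le \binom{n}{k_0}2^{\binom{k_0}{2}} \le n^{k_0}2^{k_0^2}$. So the failure probability is $\le n^{k_0}2^{k_0^2}(1-p)^{k_0^2/3} \le \exp\!\big(k_0\ln n + k_0^2(\ln 2 - p/4)\big)$, which is $o(1)$ provided $k_0\ge C\log n/p$ --- and this is exactly why the hypothesis requires $m > C\max\{s,\log n\}/p^2$: the $\log n$ guarantees $k_0\gtrsim \log n / p$ so the first term is absorbed, while the $s/p^2$ part is what the spectral inequality actually needs. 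So the final structure is: (1) spectral feasibility bound via Perron--Frobenius and Fact~\ref{F:eigen}; (2) union bound (as above) to show every large weak subset has $\ge p^2\binom{k_0}{2}$ good pairs w.h.p.; (3) combine to get $|S_{\mathrm{weak}}| = O(s/p^2) < 3m$, hence $\OPTS(L)=|S|<4m$. The delicate part is calibrating constants in steps (2)--(3) so the two lower bounds on $m$ (from needing $|S_{\mathrm{weak}}|<3m$, and from needing the union bound to close) are both implied by $m>C\max\{s,\log n\}/p^2$.
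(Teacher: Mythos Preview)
Your overall architecture matches the paper's proof exactly: split into strong/weak links, use the Perron--Frobenius spectral characterization of feasibility under power control together with Fact~\ref{F:eigen}, define ``good pairs'' where both directions of interference exceed $c\bar\calS$ (each good with probability $\ge p^2$ by independence and smoothness), and show via Chernoff-plus-union-bound that every candidate large set has $\Omega(p^2 t^2)$ good weak pairs w.h.p., which pushes the spectral radius above the feasibility threshold.

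The gap is in your union-bound bookkeeping. Your first attempt is in fact the right one, but you took the union over all $2^n$ subsets of $L$ instead of over the $\binom{n}{k_0}\le (en/k_0)^{k_0}\le n^{k_0}$ subsets of the \emph{fixed} size $k_0$. With the correct count the requirement becomes $p^2 k_0^2 \gg k_0\log n$, i.e.\ $k_0\gtrsim (\log n)/p^2$, which together with the spectral inequality (needing $k_0\gtrsim s/p^2$) is precisely the hypothesis $m>C\max\{s,\log n\}/p^2$. This is exactly what the paper does: it sets $m'=4m$, union-bounds over the $\binom{n}{m'}$ subsets of that size, and for each fixed $S$ applies Chernoff to the good-pair indicators among the $t=3m$ weak links in $S$. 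Note also that these indicators $B_{\{i,j\}}$ are \emph{fully} independent, since distinct unordered pairs involve disjoint interference variables $\SI_{ij},\SI_{ji}$; plain Chernoff applies and there is no need for Azuma or hedging with ``pairwise-independent-enough.''

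Your second ``better approach'' should be dropped; it is broken in two ways. First, the event you union-bound over (``for every unordered pair in $T$ some direction is bad'') is the event that $T$ has \emph{zero} good pairs, whereas the spectral bound needs a $\Theta(p^2)$ \emph{fraction} of good pairs. Second, and fatally, the $2^{\binom{k_0}{2}}$ orientation factor dominates $(1-p)^{\binom{k_0}{2}}$ whenever $p<\ln 2$, so your exponent $k_0^2(\ln 2 - p/4)$ is positive for small $p$ and the bound is vacuous. The straightforward subset-enumeration-plus-Chernoff from your first attempt, with the corrected count, is all that is needed.
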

\begin{proof} Let $m'=4m$.
Let $\calE_{\exists feas}$ denote the event that $L$ contains a feasible (under SS) subset of size $m'$. 
Let $\calE_{S feas}$ denote the event that a set $S$ is feasible. By the union bound, we have that $\Pr[\calE_{\exists feas}] \le \sum_{S\subseteq L, |S|=m'}\Pr[\calE_{S feas}]$. The sum is over ${n \choose m'}\le \left(\frac{en}{m'}\right)^{m'}$ subsets, so it is enough to prove that for each subset $S\subseteq L$ of size $m'$, $\Pr[\calE_{S feas}] < (en)^{-m'}$.

Let us fix a subset $S$ of size $m'=4m$ for the rest of the proof.
Note that $S$ contains a subset $S'$ of $t=3m$ links such that $\calS_i < s\cdot \bar\calS$. Since $\cD$ is a $p$-smooth distribution and all $\SI_{ji}$ are identically distributed, there is a constant $c>0$ s.t. $\bar\calS^p \ge c\bar\calS$ where $\bar \calS^p$ is such that $\Pr[\SI_{ij} > \bar\calS^p]=p$ for every pair $i,j\in L$.
For each pair of links $i,j\in S'$, let $B_{ij}$ denote the binary random variable that is $1$ iff $\min\{\SI_{ij},\SI_{ji}\} > \bar\calS^p$. Note that $B_{ij}$ are i.i.d. variables and $\Pr[B_{ij}=1] = p^2 > 0$. Consider the sum $X=\sum_{i,j\in S'}B_{ij}$. We have that $\Ex[X]=p^2\cdot t(t-1)/2$. By a standard Chernoff bound, we have that $\Pr[X< p^2\cdot t(t-1)/6] < e^{-p^2\cdot t(t-1)/9}$. Recall that $p$ is constant. We choose the constant $C$ so as to have $t\ge 3m \ge 1+\frac{12}{p^2}\cdot \log (en)$, which gives $\Pr[X< p^2\cdot t(t-1)/4]< (en)^{-m}$. 

It remains to prove that
if $X \ge  p^2\cdot t(t-1)/6$, then the set $S$ is \emph{not} feasible with any power assignment.
Let $A=(a_{ij})$ denote the \emph{normalized gain matrix} of the set $S'$, where for any pair of links $i,j\in S'$ we denote $a_{ij}=\SI_{ij}/\calS_j$ and $a_{ii}=1$.
As  shown in \cite{Zander92}, the largest SIR ratio that can be achieved with power control is $\frac{1}{r(A)-1}$, where $r(A)$ is the largest eigenvalue\footnote{It is important here that we defined the normalized gain matrix in terms of interferences and signal strengths with respect to uniform power assignment (i.e., the power level $P$ is ``cancelled'' in the ratio $\SI_{ij}/\calS_j$, leaving the gain ratio, since links $i$ and $j$ use the same power level).} of $A$. Thus, if we show that $r(A) > 1/\beta + 1$, then the set $S$ is not feasible even with power control. To that end, we will use the bound given in Fact~\ref{F:eigen}: $r(A) > \frac{1}{t}\sum_{i,j}\sqrt{a_{ij}a_{ji}}$. We  restrict our attention only to the terms with $B_{ij}=1$, as those will have sufficient contribution to the sum. Indeed, recall that for each pair $i,j$ with $B_{ij}=1$, we have $\min\{\SI_{ij},\SI_{ji}\} > \bar\calS^p \ge c\bar\calS$. Since it also holds for each link $i\in S'$, that $\calS_i < s\cdot \bar\calS$, we have: $\sqrt{a_{ij}a_{ji}}>\sqrt{\frac{(\bar\calS^p)^2}{(s\bar\calS)^2}}\ge c/s$. Since we also have $|\{i,j\in S': B_{ij}=1\}|>p^2 t(t-1)/6$, we obtain the bound:
\[
r(A) > \frac{1}{t}\cdot |\{i,j\in S': B_{ij}=1\}| \cdot \frac{c}{s} \ge \frac{cp^2(t-1)}{6s}.
\]
Hence, for any fixed $\beta$, we choose the constant $C$ so as to have $t\ge 3m>1+\frac{6s(1/\beta+1)}{cp^2}$, in which case $r(A)>1/\beta+1$ and the set $S$ is infeasible.
This completes the proof.
\end{proof}

\begin{theorem}\label{T:basic}
Let $L$ be a set of co-located links under a $p$-smooth SS distribution $\cD$ with associated sequence $\{g_n\}$. There are constants $c_1,c_2>0$, such that 
\[c_1g_n  < \Ex[\OPTS(L)] < c_2 (g_n+\log n),\]
where the expectation is taken w.r.t.\ interference and signal strength distributions.
In particular, if $g_n=\Omega(\log n)$, then $\Ex[\OPTS(L)]=\Theta(g_n)$. 
The upper bound holds even if power control is used.
\end{theorem}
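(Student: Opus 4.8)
\mypara{Proof plan}
The plan is to prove the two inequalities separately; in both of them the decisive links are those whose realized signal strength is around $g_n\bar\calS$. I will use two facts throughout. First, the variables $\SS_i$ and $\SI_{ji}$ are mutually independent with common mean $\bar\calS$ (co-located links). Second, the maximality in the definition of $g'_n$ forces $nf(t)<t$ for every $t>g'_n$: since $f$ is non-increasing, $f(t)\le f(g'_n)\le 2g'_n/n<2t/n$, so the only way to have $f(t)\notin[t/n,2t/n]$ is $f(t)<t/n$; dually $f(g_n)\ge g_n/n$ whenever $g_n=g'_n$.

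\emph{Lower bound.} If $g_n$ is below a suitable constant $c_0=c_0(\beta)$, the bound is immediate since $\OPTS(L)\ge 1$ always. Otherwise $g_n=g'_n$, so the random set $T=\{i\in L:\SS_i>g_n\bar\calS\}$ satisfies $\Ex[|T|]=nf(g_n)\ge g_n$; as $|T|$ is a sum of $n$ independent indicators, a Chernoff bound gives $|T|\ge g_n/2$ with probability at least $1/2$. I then condition on a signal realization with $|T|\ge k$ for $k=\lfloor g_n/(8\beta)\rfloor$, fix any $S\subseteq T$ of size $k$, and note that for $i\in S$ we have $\SS_i>g_n\bar\calS$ while $\Ex[\SI_{ji}]=\bar\calS$ (interferences independent of signals), hence $\Ex[\affS_S(i)]<(k-1)/g_n<1/(8\beta)$. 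Markov's inequality then yields $\affS_S(i)<1/(4\beta)$ for at least $k/2$ links $i\in S$ in expectation, and by Lemma~\ref{L:robustness} with $\beta'=4\beta$ (for which $\lceil 2\beta/\beta'\rceil=1$) that sub-collection is feasible under SS. Hence $\Ex[\OPTS(L)]\ge \frac{1}{2}\cdot\frac{k}{2}=\Omega(g_n)$.

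\emph{Upper bound.} Let $C$ be the constant from Lemma~\ref{L:upperbound}, fix a large constant $c$, and set $s_0=\max\{2g_n,c\log n\}$ and $m=\max\{4s_0,\lceil 2Cs_0/p^2\rceil\}=\Theta(g_n+\log n)$. Since $s_0>g'_n$ we have $nf(s_0)<s_0$, so the number $N$ of links with $\SS_i>s_0\bar\calS$ has mean below $s_0$; as $N$ is a sum of $n$ independent indicators and $s_0\ge c\log n$, a Chernoff bound gives $\Pr[N\ge 4s_0]\le(e/4)^{4s_0}\ll n^{-2}$. On the complementary event I condition on the signal realization: there are then at most $m$ links with signal at least $s_0\bar\calS$, and $m>C\max\{s_0,\log n\}/p^2$, so Lemma~\ref{L:upperbound} applies (the interferences still being fresh $p$-smooth samples, by independence) and gives $\OPTS(L)<4m$ with high probability, even under power control; since $m=\Omega(\log n)$, the failure probability there is also $\ll n^{-2}$. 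Bounding $\OPTS(L)\le n$ on the two exceptional events gives $\Ex[\OPTS(L)]\le 4m+n\cdot o(n^{-2})=O(g_n+\log n)$. The final $\Theta(g_n)$ claim then follows by combining the two bounds in the regime $\log n=O(g_n)$.

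\emph{Main obstacle.} The delicate point is the upper bound --- ruling out the event that, by chance, many links realize a large signal, since on such an event $\OPTS(L)$ could be as large as $n$. Markov's inequality on the count $N$ is far too weak for this (it would leave a $\Theta(n)$ term in the expectation), so one must use a Chernoff bound, which is strong enough only because the threshold $s_0$ is $\Omega(\log n)$; this is precisely why the ``$\log n$'' term in the statement cannot be removed. One must likewise verify that the high-probability guarantee in Lemma~\ref{L:upperbound} is of the form $1-o(1/n)$ (it is, as $m=\Omega(\log n)$), so that its failure event contributes negligibly to the expectation. A minor technicality is that $\cD$ may have atoms, so when passing between $\{\SS_i>t\bar\calS\}$ and $\{\SS_i\ge t\bar\calS\}$ one should use a slightly perturbed threshold; this is harmless for the distributions of interest.
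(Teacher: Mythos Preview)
Your proposal is correct and follows essentially the same route as the paper: for the lower bound, isolate the ``strong'' links with $\SS_i>g_n\bar\calS$, bound the expected affectance on each by a constant, and apply Markov together with Lemma~\ref{L:robustness}; for the upper bound, use a Chernoff bound to control the number of links above a threshold and then invoke Lemma~\ref{L:upperbound}. The only cosmetic differences are that you condition explicitly on signal realizations (which makes the affectance computation cleaner), and that in the upper bound you raise the threshold to $s_0=\max\{2g_n,c\log n\}$ and exploit the maximality of $g'_n$ to bound $\Ex[N]$, whereas the paper keeps the threshold at $g_n$ and instead inflates $m$ to $\Theta(g_n+\log n)$; both lead to the same invocation of Lemma~\ref{L:upperbound} and the same $O(g_n+\log n)$ bound.
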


\begin{proof}
 We begin by showing the first inequality. Note that the case $g_n=1$ trivially holds, so we assume $g_n>1$.

Let us call a link \emph{strong} if $\SS_i > g_n \bar\calS$. By the definition of values $g_n$ and since $g_n>1$, the probability that any fixed link $i$ is strong is at least $g_n/n$ and at most $2g_n/n$. Let $S$ be the subset of strong links. The observation above readily implies, by the linearity of expectation, that 
$
g_n\le \Ex[|S|] \le 2g_n.
$

Recall that $\Ex[\SI_{ji}]=\bar\calS$ for all links $i,j\in L$. Thus, the expected affectance on each link $i\in S$ is
\[
\Ex\left[\sum_{j\in S\setminus\{i\}}\frac{\SI_{ji}}{\calS_i}\right]\le \frac{1}{g_n \bar\calS}\Ex\left[\sum_{j\in S\setminus\{i\}}\SI_{ji}\right] = \frac{\bar\calS \cdot\Ex[|S|]}{g_n\bar\calS}\le 2,
\] 
where the expectation is taken over the distribution of interferences $\SI_{ji}$.
Thus, using Markov's inequality, we get that the expected number of links in $S$ with $\affS_S(i) \le 4$  is  $\Omega(|S|)$.  This, together with Lemma~\ref{L:robustness} and the fact that $\Ex[|S|]\ge g_n$, proves that $\Ex[\OPTS(L)] = \Omega(g_n)$.

Now, let us demonstrate the second inequality of the claim. Let us assume that power control is allowed, and, as before, let $\calS_i$ denote the signal strength of link $i$ and $\calI_{ij}$ denote the interference from link $i$ to link $j$ when all links use uniform power $P$. 
Denote $c = c'\cdot(1+\frac{\log n}{g_n})$, where $c'>8$ is a large enough constant. We will show that $\Ex[\OPTS(L)]=O(cg_n)$. 
Let $\calE_{str}$ be the event that there are at least $cg_n/4$ strong links. 
Recall that the expected number of strong links is at  most $2g_n$. Since $c'>8$, applying standard Chernoff bound gives $\Pr[\calE_{str}] < e^{-\log (2n)}<1/n$, i.e., with high probability, $\overline{\calE_{str}}$ holds: There are at most $cg_n/4$ strong links.

 Given that $\overline{\calE_{str}}$ holds, Lemma~\ref{L:upperbound}, applied with $m=cg_n/4$ and $s=g_n$ (assuming the constant $c'$ is suitably large), tells us that gives that with high probability, $\OPTS(L)=O(cg_n)$. This completes the proof.
\end{proof}

\begin{corollary}\label{C:lognormal}
For a set $L$ of $n$ co-located equal length links with Log-Normal Shadowing distribution $\cD$, \[\Ex[\OPTS(L)] = \Theta(e^{2\sigma \sqrt{\ln n}}).\]
\end{corollary}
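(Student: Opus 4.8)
\medskip\noindent\textbf{Proof plan.}
The plan is to obtain the corollary from Theorem~\ref{T:basic} by computing the sequence $\{g_n\}$ attached to the Log-Normal Shadowing distribution. First I would note that this distribution, with its fixed parameter $\sigma>0$, is $p$-smooth for every constant $p\in(0,1)$: the normalized signal $\SS_i/\GS_i$ is a log-normal random variable of mean $1$ whose only remaining parameter is $\sigma$, so its $(1-p)$-quantile $\bar\calS^p_i/\GS_i$ equals a positive constant $c=c(p,\sigma)$, i.e.\ $\bar\calS^p_i\ge c\,\GS_i$. Hence Theorem~\ref{T:basic} applies, and it reduces the statement to two claims: $(i)$ $g_n=\Theta(e^{2\sigma\sqrt{\ln n}})$; and $(ii)$ $g_n=\Omega(\log n)$, which is the hypothesis under which Theorem~\ref{T:basic} upgrades its two-sided bound to $\Ex[\OPTS(L)]=\Theta(g_n)$. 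Claim $(ii)$ is immediate from $(i)$: for constant $\sigma$ we have $\sqrt{\ln n}=\omega(\ln\ln n)$, hence $e^{2\sigma\sqrt{\ln n}}=\omega(\log n)$.

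The substance of the proof is therefore claim $(i)$, the estimate of $g_n$. Writing $\SS_i=e^{Z_i}$ with $Z_i\sim\calN(\mu_i,\sigma^2)$ and using $\Ex[\SS_i]=\GS_i$, i.e.\ $e^{\mu_i+\sigma^2/2}=\bar\calS$, a direct computation gives
\[
f(t)=\Pr[\SS_i>t\bar\calS]=\Psi\!\left(\frac{\ln t}{\sigma}+\frac{\sigma}{2}\right),
\qquad \Psi(x):=\Pr[\calN(0,1)>x].
\]
The map $h(t):=n f(t)/t$ is a product of two positive, strictly decreasing functions, hence strictly decreasing, with $\lim_{t\to 0^+}h(t)=\infty$ and $\lim_{t\to\infty}h(t)=0$; so the value $g'_n$ in the definition of $g_n$ is precisely the unique root of $f(t)=t/n$, and $g_n=\max\{1,g'_n\}$. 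I would then substitute the sharp two-sided tail estimate $\Psi(x)=\Theta\!\left(x^{-1}e^{-x^2/2}\right)$ for $x\ge 1$, take logarithms in $f(t)=t/n$, and solve for $\ln t$. Since $\ln t$ is of strictly smaller order than $\ln n$, it enters only the lower-order terms, so to leading order one is solving a quadratic in $\ln t$ whose root obeys $\ln g_n=\Theta(\sigma\sqrt{\ln n})$; carrying the constants through the correction terms yields the claimed $g_n=\Theta(e^{2\sigma\sqrt{\ln n}})$. Substituting this, together with $(ii)$, into Theorem~\ref{T:basic} finishes the proof.

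The main obstacle is precisely this last computation. The constant in the exponent of $g_n$ is load-bearing — it sits inside the $\Theta$ — so a one-sided bound such as $\Psi(x)\le e^{-x^2/2}$ does not suffice, as it would only pin down $\ln g_n$ up to a constant factor. One must use a two-sided Gaussian tail estimate and verify carefully that the polynomial prefactor $\Theta(x^{-1})$ together with the self-referential occurrence of $\ln t$ on both sides of $f(t)=t/n$ perturb $\ln g_n$ by only an additive $O(1)$, which is then absorbed by the $\Theta$. The remaining ingredients — the $p$-smoothness of the log-normal distribution and the black-box application of Theorem~\ref{T:basic} — are routine.
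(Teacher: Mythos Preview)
Your proposal is correct and follows essentially the same route as the paper: both verify that LNS is $p$-smooth (the paper picks $p=1/2$ by comparing the median $e^\mu$ to the mean $e^{\mu+\sigma^2/2}$, while you argue via the constant quantile of the normalized signal), compute $f(t)=Q(\ln t/\sigma+\sigma/2)$, apply the two-sided Gaussian tail estimate $Q(x)=\Theta(x^{-1}e^{-x^2/2})$ to solve $f(g_n)=\Theta(g_n/n)$, and then invoke Theorem~\ref{T:basic}. Your write-up is in fact slightly more careful than the paper's in identifying $g'_n$ as the unique root of $f(t)=t/n$ and in flagging that the $x^{-1}$ prefactor and the self-referential $\ln t$ term must be shown to contribute only $O(1)$ to $\ln g_n$; the paper compresses all of this into ``a simplification gives\ldots''.
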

\begin{proof} 
First, let us estimate $g_n$. Let $\mu$ and $\sigma$ be the parameters associated with the LNS distribution of each link $i$. Since the links have equal lengths, those parameters are the same across all links in $L$.
 Recall that $\bar \calS=\Ex[\SS_i]=e^{\mu+\sigma^2/2}$ for each link $i\in L$.
We will use the fact that the tail probability of a log-normal variable $X$ with parameters $\mu,\sigma$ is as follows:
$
P[X>t] =Q\left(\frac{\ln{t} - \mu}{\sigma}\right),
$
 where $Q(x)=\int_{x}^{\infty}e^{-x^2/2}{d x}$ is the tail probability of the standard normal distribution $\phi(x)=\frac{e^{-x^2/2}}{\sqrt{2\pi}}$. There is no closed form expression for $Q(x)$, but it can be approximated as follows for all $x>0$:
$
\frac{x}{x^2+1}\cdot \phi(x) < Q(x) < \frac{1}{x}\cdot \phi(x).
$

Using these formulas, we obtain:
$
f(t)=\Pro[\SS_i>t\bar \calS]=Q(\ln t/\sigma + \sigma/2).
$
Denote $x=g_n/\sigma + \sigma/2$ for some $n>1$. Then, by the definition of $g_n$, we must have that $f(g_n)=Q(x)=\Theta\left(\frac{e^{-x^2/2}}{x}\right)= \Theta(g_n/n)$. A simplification gives $g_n=\Theta(e^{2\sigma \sqrt{\ln n}})$.
It remains to show that LNS is a smooth distribution, i.e., for some constant $p$, $\bar \calS^p=\Omega(\bar\calS)$. To this end, note that the mean of a log normal variable $X$ with parameters $\sigma,\mu$ is $e^{\mu+\sigma^2/2}$ and the median is $e^\mu$, so assuming $\sigma$ is fixed, we can take $p=1/2$.
\end{proof}

\paragraph{Remark.}  Theorem~\ref{T:basic} can be extended in two ways. First, we can assume that the sender nodes and receiver nodes are not in exactly the same location, but are within a region of diameter smaller compared to the link length. Second, we can assume that the links do not have exactly equal lengths, but the lengths differ by small constant factors. These modifications incur only changes in constant factors. This follows from Lemma~\ref{L:robustness}. That is, the theorem predicts increased capacity due to shadowing not only for co-located sets of links, but also sets that contain dense parts, i.e., have subsets that are ``almost co-located''.

The results above assert that under SS, the main contributor to the capacity increase in a co-located set of links are the ``strong'' links. The following result demonstrates that significant capacity increase (though not as dramatic as above) can happen even when all links have signal strength \emph{fixed} to the GPL value. This phenomenon is due to interference distributions.

\begin{theorem}
Let $L$ be a set of co-located links, all of length $\ell$. Assume that the signal strengths are fixed and equal to $\calS_i=\GS_i=\bar\calS$  but the interferences are drawn from Log-Normal shadowing model $\cD$. Then,
\[
\Ex[\OPTS(L)]=\Omega\left(\sqrt{\log n/\log \log n}\right).
\] 
\end{theorem}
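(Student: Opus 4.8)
The plan is to lower-bound $\Ex[\OPTS(L)]$ by the second moment method: I will show that for a suitable $m=\Theta(\sqrt{\log n/\log\log n})$, with probability $1-o(1)$ the randomly drawn interferences admit a feasible set of exactly $m$ links, and then conclude via $\Ex[\OPTS(L)]\ge m\cdot\Pr[\OPTS(L)\ge m]$. Let $X$ denote the number of $m$-element subsets $S\subseteq L$ that are feasible (under SS). The whole argument rests on one structural observation: since every signal strength equals $\bar\calS$, a set $S$ is feasible iff $\sum_{j\in S\setminus i}\SI_{ji}<\bar\calS/\beta$ for every $i\in S$, and for distinct $i,i'\in S$ these two conditions involve \emph{disjoint} sets of interference variables (those whose receiver-index is $i$, resp.\ $i'$); by the assumed independence of all the $\SI_{ji}$, the $m$ per-receiver events are therefore mutually independent. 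Hence $\Pr[S\text{ feasible}]=P_m^{\,m}$, where $P_m:=\Pr[\sum_{j=1}^{m-1}\SI_j<\bar\calS/\beta]$ for i.i.d.\ log-normal $\SI_j$ of mean $\bar\calS$ and parameter $\sigma$.

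First I would estimate $\Ex[X]=\binom nm P_m^{\,m}$. To lower-bound $P_m$, restrict to the event that each of the $m-1$ summands lies below $\bar\calS/(\beta(m-1))$, giving $P_m\ge\Pr[\SI_1<\bar\calS/(\beta(m-1))]^{m-1}$, and then apply the same log-normal tail estimate used in Corollary~\ref{C:lognormal} (the bounds on the Gaussian $Q$-function) to obtain $\Pr[\SI_1<\bar\calS/(\beta(m-1))]\ge\exp\!\big(-(1+o(1))(\log m)^2/(2\sigma^2)\big)$, hence $\log(1/P_m)\le(1+o(1))\,m(\log m)^2/(2\sigma^2)$. For $m=\Theta(\sqrt{\log n/\log\log n})$ this is $o(\log n)$, so $P_m=n^{-o(1)}$ and $\log\Ex[X]\ge m\log n-(1+o(1))m^2(\log m)^2/(2\sigma^2)\to\infty$ (the first term dominates because $m^2(\log m)^2=\Theta(\log n\log\log n)=o(m\log n)$).

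Next I would bound the second moment. Writing $E_i^S$ for the event $\sum_{j\in S\setminus i}\SI_{ji}<\bar\calS/\beta$, for $m$-sets $S,S'$ with $|S\cap S'|=k$ the same receiver-column disjointness gives $\Pr[S,S'\text{ feasible}]=P_m^{\,2(m-k)}\prod_{i\in S\cap S'}\Pr[E_i^S\cap E_i^{S'}]\le P_m^{\,2(m-k)}\cdot P_m^{\,k}=P_m^{\,2m-k}$, where I only use the trivial bound $\Pr[E_i^S\cap E_i^{S'}]\le\Pr[E_i^S]=P_m$. Summing over $k$ with the standard estimates $\binom mk\le m^k/k!$ and $\binom{n-m}{m-k}/\binom nm\le(2m/n)^k$ yields $\Ex[X^2]/\Ex[X]^2\le\exp\!\big(2m^2/(nP_m)\big)=1+o(1)$, since $m^2=O(\log n)$ while $nP_m=n^{1-o(1)}$. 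Paley--Zygmund then gives $\Pr[X\ge1]\ge\Ex[X]^2/\Ex[X^2]=1-o(1)$, so $\Ex[\OPTS(L)]\ge m(1-o(1))=\Omega(\sqrt{\log n/\log\log n})$.

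I expect the main obstacle to be the estimate of $P_m$ in the second step: pinning down the log-normal lower-tail probability quantitatively enough that $\log(1/P_m)$ is genuinely $o(\log n)$ (this is exactly what fixes the value $m=\Theta(\sqrt{\log n/\log\log n})$, and any slack in the tail bound is absorbed into the hidden constant). Once that is done, both moment computations are routine: the first because $P_m$ is only $n^{-o(1)}$, and the second because the sole correlation between overlapping sets sits in the at most $k$ paired events per shared receiver, each controlled by the one-line bound above. The chosen $m$ also sits comfortably below the range in which the method keeps working, so there is room to spare — the stated bound is weaker than what the argument actually delivers, but it is all that is needed here.
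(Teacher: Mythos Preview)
Your proof is correct. The paper, however, takes a simpler route: rather than counting all $m$-subsets and controlling correlations via the second moment, it partitions $L$ into $\lfloor n/t\rfloor$ \emph{disjoint} blocks of size $t$, observes that the feasibility events of disjoint blocks involve pairwise disjoint families of interference variables and are therefore mutually independent, and then bounds directly the probability that \emph{none} of the blocks is feasible by $(1-p)^{n/t}$, where $p\ge e^{t(t-1)\phi(t)}$ is the per-block feasibility probability (obtained, exactly as in your argument, by forcing every single interference below $\bar\calS/t$). Choosing $t$ so that $(n/t)\,p$ is bounded away from zero finishes the proof with no second-moment bookkeeping. What your approach buys is leverage: because you range over $\binom nm$ rather than $n/m$ candidate sets, your first-moment constraint is $m(\log m)^2=O(\log n)$ rather than the paper's $m^2(\log m)^2=O(\log n)$, so your framework in fact supports the stronger choice $m=\Theta(\log n/(\log\log n)^2)$---as you remark, more than is needed here. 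The paper's argument is more elementary and avoids Paley--Zygmund entirely, at the cost of this extra headroom.
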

\begin{proof}
Assume, for simplicity, that $\beta=1$.
Consider any fixed subset $S\subseteq L$ of size $|S|=t>e^{\sigma^2}$ for a fixed $t$ to be specified later, and let $i,j\in S$. From the definition of LNS we have,  
 \begin{align*}
 \Pr[\SI_{ij} < \bar\calS/t]&=1-Q\left(\frac{\ln (e^{\mu+\sigma^2/2}/t) - \mu}{\sigma}\right)\\
 &=1-Q\left(\frac{\sigma^2/2-\ln t}{\sigma}\right)\\
 &=Q\left(\frac{\ln t-\sigma^2/2}{\sigma}\right)\\
&\ge \exp(-\ln^2 t/2\sigma^2 +\ln t/2 - \ln \ln t + c),
 \end{align*}
 for a constant $c$, where we used the fact that $Q(x)=1-Q(-x)$. Denote  $\phi(t)=-\ln^2 t/2\sigma^2 +\ln t/2 - \ln \ln t + c$. Thus, the probability that $\SI_{ji} < \bar\calS/t$ holds for all pairs $i,j\in S$, is $e^{t(t-1)\phi(t)}$. Note that if the latter event happens then $S$ is a feasible set. Let us partition $L$ into $\lfloor n/t \rfloor$ subsets of size $t$. From the discussion above we have that each of those subsets is feasible with probability at least $e^{t(t-1)\phi(t)}$. Since those subsets are disjoint, the probability that none of them is feasible is at most 
$(1-e^{-t(t-1)\phi(t)})^{n/t}<\exp(-\frac{n}{t}\cdot e^{-t(t-1)\phi(t)})$. In order to have the latter probability smaller than $1/2$, it is sufficient to set
$n/t > e^{-t(t-1)\phi(t)}$, which holds when $t=O\left(\sqrt{\frac{\log n}{\log \log n}}\right)$
\end{proof}

The following result  demonstrates the limitations of Thm.~\ref{T:basic}. Namely, it shows that the $O(\log n)$ slack in the upper bound cannot be replaced with constant in general: We show that there is an SS distribution with $g_n=O(1)$, such that the gap between SS and GPL capacities is at least a factor of $\Omega(\sqrt{\log n/\log\log\log n})$ for co-located equal length links. It also shows that $g_n=\Omega(\log n)$ is not necessary for increase in capacity compared to GPL.
\begin{theorem}
There is an SS distribution $\cD$ with $g_n=O(1)$ for each $n$, such that 
$\Ex[\OPTS(L)]=\Omega\left(\sqrt{\frac{\log n}{\log\log\log n}}\right)$
 holds for a set $L$ of $n$ co-located  links.
\end{theorem}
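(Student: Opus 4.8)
The plan is to exhibit a single bounded, atomless shadowing distribution $\cD$ whose \emph{lower} tail decays so slowly that two co-located links have small mutual interference with only a mildly subconstant probability, while boundedness of its support forces $g_n=O(1)$ for every $n$. Fix constants $\gamma>0$ and $c_0>1-\log 2$ (say $\gamma=1$, $c_0=1$), put $\delta_0=1-(\log 2+c_0)^{-\gamma}\in(0,1)$, and let $X$ -- the mean-$\bar\calS$ representative of $\cD$, the version with mean $\mu$ being the law of $(\mu/\bar\calS)X$ -- have the continuous cumulative distribution function
\[
F_X(x)=\begin{cases}0,& x\le 0,\\ (\log(\bar\calS/x)+c_0)^{-\gamma},& 0<x\le\bar\calS/2,\\ (1-\delta_0)+\delta_0\dfrac{x-\bar\calS/2}{(C-1/2)\bar\calS},& \bar\calS/2\le x\le C\bar\calS,\\ 1,& x\ge C\bar\calS,\end{cases}
\]
where $\epsilon_0=\int_{\log 2}^{\infty}(u+c_0)^{-\gamma}e^{-u}\,du\in(0,1/2)$ and $C=\tfrac12+\tfrac{1+2\epsilon_0}{\delta_0}>1$ is chosen so that $\Ex[X]=\int_0^{C\bar\calS}(1-F_X(x))\,dx=\bar\calS(\tfrac12-\epsilon_0)+\tfrac{\delta_0(C-1/2)}{2}\,\bar\calS=\bar\calS$. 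One checks that $F_X$ is continuous and non-decreasing (the two middle pieces agree, both equal to $1-\delta_0$, at $x=\bar\calS/2$), so $\cD$ is a legitimate atomless shadowing distribution with mean $\bar\calS$.

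I would first verify $g_n=O(1)$. The tail $f(t)=\Pr[X>t\bar\calS]=1-F_X(t\bar\calS)$ is continuous and non-increasing, equals $1-(\log(1/t)+c_0)^{-\gamma}$ for $t\le\tfrac12$, and is identically $0$ for $t\ge C$; in particular $f(t)>0$ iff $t<C$. Hence, for every $n$, any $t$ with $f(t)\in[t/n,2t/n]$ satisfies $f(t)\ge t/n>0$ and therefore $t<C$, so $g'_n<C$ and $g_n=\max\{1,g'_n\}\le\max\{1,C\}=O(1)$. (Such a $t$ exists, since $f(t)-t/n$ is continuous, positive near $t=0$ and negative near $t=C$.) It is precisely the boundedness of the support that buys this: if $f(t)>0$ for all $t$ then $g_n\to\infty$.

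For the capacity lower bound, put $t=\lfloor c\sqrt{\log n/\log\log\log n}\rfloor$ with $c=1/\sqrt{2\gamma}$, and let $L'=\{i\in L:\SS_i\ge\bar\calS/2\}$. Since $\Ex[|L'|]=\delta_0 n$, a Chernoff bound gives that $E_1:=\{|L'|\ge\delta_0 n/2\}$ holds with probability $1-e^{-\Omega(n)}$. On $E_1$, fix (as a function of the realized signals) $k:=\lfloor\delta_0 n/(2t)\rfloor=n^{1-o(1)}$ pairwise disjoint $t$-element subsets of $L'$. For any such $S$, if $\SI_{ji}<\bar\calS/(2\beta t)$ for all ordered pairs $i\ne j$ in $S$, then for each $i\in S$,
\[
\affS_S(i)=\sum_{j\in S\setminus\{i\}}\frac{\SI_{ji}}{\SS_i}<\frac{(t-1)\cdot\bar\calS/(2\beta t)}{\bar\calS/2}<\frac1\beta,
\]
so $S$ is feasible under SS. The $t(t-1)$ interference events here are mutually independent and independent of the signals, and each has probability $F_X(\bar\calS/(2\beta t))=(\log(2\beta t)+c_0)^{-\gamma}\ge\tfrac12(\log t)^{-\gamma}$ for $n$ large. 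Hence, conditionally on any signal realisation in $E_1$, $\Pr[S\text{ feasible}]\ge(\tfrac12(\log t)^{-\gamma})^{t(t-1)}$; the logarithm of the right side is $-\gamma t(t-1)\log\log t-O(t^2)$, and using $t^2=\Theta(\log n/\log\log\log n)$ together with $\log\log t=(1+o(1))\log\log\log n$ this equals $-(1+o(1))\gamma c^2\log n=-(\tfrac12+o(1))\log n$, so $\Pr[S\text{ feasible}]\ge n^{-1/2-o(1)}$. The $k$ subsets being disjoint, their feasibility events are (conditionally on the signals) independent, so the probability that none is feasible is at most $(1-n^{-1/2-o(1)})^{k}\le e^{-k\,n^{-1/2-o(1)}}=e^{-n^{1/2-o(1)}}\to 0$. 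Therefore $\OPTS(L)\ge t$ with probability $\ge(1-e^{-\Omega(n)})(1-e^{-n^{1/2-o(1)}})=1-o(1)$, and $\Ex[\OPTS(L)]\ge t\cdot(1-o(1))=\Omega(\sqrt{\log n/\log\log\log n})$.

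The crux is calibrating the lower tail. The rate $F_X(x)\asymp(\log(1/x))^{-\gamma}$ near $0$ is exactly what makes the per-pair probability $\approx(\log t)^{-\gamma}$, raised to the $\Theta(t^2)$ power, equal to $e^{-\Theta(t^2\log\log t)}$, hence the largest affordable block size $\Theta(\sqrt{\log n/\log\log\log n})$: a heavier polynomial tail $x^{\gamma}$ near $0$ would give only $\Theta(\sqrt{\log n/\log\log n})$ (the bound of the preceding theorem for log-normal interference), while an even lighter, $\log\log$-type tail would overshoot. Simultaneously the support must stay bounded (for $g_n=O(1)$) and the mean pinned at $\bar\calS$, and the free parameters $\delta_0,C$ are precisely what is spent on these two constraints. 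Beyond choosing the distribution, the rest is the same disjoint-blocks argument -- no second moments, no union bound past the independence of disjoint blocks -- used for the preceding theorems.
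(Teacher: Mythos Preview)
Your proof is correct and follows essentially the same strategy as the paper's: build a distribution with bounded support (forcing $g_n=O(1)$) whose \emph{lower} tail is so heavy that $\Pr[X<\epsilon\bar\calS]\asymp(\log(1/\epsilon))^{-\gamma}$, then partition $L$ into disjoint size-$t$ blocks and use independence across blocks to guarantee that some block is feasible when $t\asymp\sqrt{\log n/\log\log\log n}$.

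The differences are cosmetic. The paper uses the discrete law $\Pr[X=c\bar\calS/2^t]=\tfrac{6}{\pi^2 t^2}$, whose lower tail is $\Pr[X<\epsilon\bar\calS]=\Theta(1/\log(1/\epsilon))$, i.e.\ your $\gamma=1$; you give a continuous CDF with the same tail exponent but more explicit bookkeeping (continuity, mean normalisation). The paper fixes the blocks in advance and bounds the joint event ``all signals high and all interferences low'' within each block; you instead first isolate the high-signal links $L'$ by Chernoff and then form blocks inside $L'$, conditioning on the signal realisation before invoking independence of the interferences. Both orderings are valid and lead to the same arithmetic $t^2\log\log t\asymp\log n$. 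Your write-up is somewhat more careful about the constants and about why the per-block success probability is $n^{-1/2-o(1)}$, but the underlying idea is identical.
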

\begin{proof}
Let $\calS$ denote the expected signal strength of each link, as before.  Consider the following discrete SS distribution $\cD$, where  for each link $i$ and $t\ge 1$, $\Pr[\SS_i=\frac{c\calS}{2^t}]=\frac{6}{\pi^2 t^2}$, where $c=\left(\frac{6}{\pi^2}\sum_{t\ge 1}\frac{1}{t^22^t}\right)^{-1}$. It is easy to show that $\Ex[\SS_i]=\calS$ and $g_n=O(1)$ for each $n$. Consider an arbitrary subset $S$ of size $m$ and fix a link $i\in S$. It follows from the definition of $\cD$, that  $\Pr[\SS_i \ge \calS] \ge 1/2$. For each link $j\in S\setminus\{i\}$, $\Ex[\SI_{ji}]=\calS$ and $\Pr[\SI_{ij} < \frac{\calS}{\beta \cdot m}]\ge \frac{1}{2\beta \log^2 m}=\frac{1}{2\log^2 m}$. Thus, by independence of interferences, $\Pr[\sum_{j\in S\setminus i}\SI_{ji}<\frac{\calS}{\beta}]\ge \Pr[\SI_{ji}<\frac{\calS}{\beta m}, \mbox{ for all } j\in S\setminus i] \ge \left(\frac{1}{2 \log^2 m}\right)^{m-1}$ and 
\[\Pr[\affS_S(i)<1/\beta]\ge \Pr\left[\sum_{j\in S\setminus i}\SI_{ji}<\frac{\calS}{\beta}|\SS_i\ge \calS\right]\cdot \Pr[\SS_i\ge \calS]\ge \left(\frac{1}{2 \log^2 m}\right)^m.\]
Thus, set $S$ is feasible with probability at least $\left(\frac{1}{2 \log^2 m}\right)^{m^2}$. Split $L$ into $n/m$ disjoint subsets $S_1,S_2,\dots,S_{n/m}$ of size $m$ and let $\calE_t$ denote the event that  $S_t$ is feasible. Note that the events $\calE_t$ are independent, and $\Pr[\calE_t]\ge \left(\frac{1}{2 \log^2 m}\right)^{m^2}$. By the union bound and independence,  $\Pr[\exists t, \calE_t]\ge \frac{n}{m}\cdot \left(\frac{1}{2 \log^2 m}\right)^{m^2}$. Thus, in order to have $\Pr[\exists t, \calE_t]\ge 1/2$, it suffices to take $m\ge \sqrt{\frac{\log n}{3\log\log\log n}}$.
\end{proof}

\section{Computing Capacity under Shadowing}

In this section we study the algorithmic aspect of {\capacity} under SS. Namely, our aim is to design algorithms that
perform well in expectation under SS, compared with the expected optimum under SS. In the first part, we handle (nearly)
co-located links, while the second part discusses the more general case when the links are arbitrarily placed on the
plane. In both cases, we obtain constant factor approximations for links of bounded length diversity,
under general SS distributions satisfying weak technical assumptions.
This holds against an optimum that can use arbitrary power control.

Let us start with several definitions.
For any set $L$ of links, let $\Delta=\Delta(L)=\max_{i,j\in L}\{l_i/l_j\}$ denote the max/min link length ratio in $L$.
A set $L$ is called \emph{equilength} if $\Delta(L) < 2$.

An equilength set $S$ of links is called \emph{cluster} if there is a square of side length $\ell/2$ that contains the sender nodes of all links in $S$, where $\ell=\min_{i\in S}\{l_i\}$.

\subsection{Clusters}\label{S:basicalgo}

Consider a cluster  $L$ of links with minimum length $\ell$. The algorithm is based on Thm.~\ref{T:basic}, which suggests that choosing only the strong links  is sufficient for a constant factor approximation to the expected optimum. We prove that a similar result holds in a more general setting, where
 the signal strengths $\calS_i$ of links $i\in L$ are \emph{fixed and arbitrary}, while interferences are drawn from a $p$-smooth SS distribution $\cD$ for a constant $p>0$. The algorithm is as follows.

\paragraph{Algorithm.} Let set $S$ be constructed by iterating over the set $L$ in a decreasing order of link strength, and adding each link $i$ to $S$ if its strength $\calS_i$ satisfies $\calS_i > 2\beta |S| \bar\calS $. 
Output the set $ALG=\{i\in S: \affS_S(i)< \frac{1}{\beta}\}$ of successful links in $S$.

The theorem below shows that this strategy results in only \emph{additive} $O(\log n)$ expected error, and yields a fully constant factor approximation when signal strengths are drawn from an SS distribution with $g_n=\Omega(\log n)$, such as LNS.

\begin{theorem}\label{T:colocatedalgo}
Let $L$ be a cluster of $n$ links. Assume that the signal strengths of links are fixed and arbitrary, but interferences are drawn from a $p$-smooth SS distribution $\cD$ for a constant $p>0$. Then,
\[
\Ex_{int}[\OPTS(L)] = O(\Ex_{int}[|ALG|] + \log n),
\] where the expectation is taken only w.r.t.\  interference distributions. When the signal strengths are also drawn from $\cD$ and we further have $g_n=\Omega(\log n)$, then 
\[
\Ex[|ALG|]= \Theta(\Ex[\OPTS(L)]),
\]
 where  expectation is taken w.r.t.\ signal and interference distributions.
\end{theorem}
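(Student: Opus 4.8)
The plan is to prove the two displayed statements separately, deriving both from (i)~an upper bound on $\OPTS(L)$ via Lemma~\ref{L:upperbound}, (ii)~a lower bound $\Ex_{int}[|ALG|]=\Omega(|S|)$, and, for the second statement, (iii)~the estimate $\Ex[\OPTS(L)]=\Theta(g_n)$ from Theorem~\ref{T:basic}. The crucial structural fact is that, because the algorithm scans $L$ in non-increasing order of signal strength and only inserts links, \emph{every} surviving link is strong relative to $|S|$: if $S=\{i_1,\dots,i_k\}$ in insertion order then $\calS_{i_1}\ge\cdots\ge\calS_{i_k}$, and at the moment $i_k$ is inserted we have $|S|=k-1$, so $\calS_{i_k}>2\beta(k-1)\bar\calS$; hence $\calS_i>2\beta(|S|-1)\bar\calS$ for \emph{all} $i\in S$. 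Since $L$ is a cluster, the expected pairwise interferences are all $\Theta(\bar\calS)$ (absorbing this $O(1)$ factor into the algorithm's constant, or into the reference value $\bar\calS$), so by linearity $\Ex_{int}[\affS_S(i)]=\sum_{j\in S\setminus i}\GI_{ji}/\calS_i\le 1/(2\beta)$ for each $i\in S$; Markov's inequality gives $\Pr[\affS_S(i)\ge 1/\beta]\le 1/2$, so $\Ex_{int}[|ALG|]\ge |S|/2$. (Note $ALG$ is feasible, since $\affS_{ALG}(i)\le\affS_S(i)<1/\beta$ for $i\in ALG$, hence $|ALG|\le\OPTS(L)$ pointwise.) In the first statement the signal strengths are fixed, so $|S|$ is deterministic.

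For the upper bound I feed Lemma~\ref{L:upperbound} the right threshold. Let $N(s)=|\{i\in L:\calS_i\ge s\bar\calS\}|$; the same scan-order argument shows that if $s>2\beta|S|$ then every such link would have been inserted, hence $N(s)\le|S|$. With $C$ the constant of Lemma~\ref{L:upperbound}, put $m=\max\{2C\log n/p^2,\ (4C\beta/p^2+1)|S|\}=\Theta(|S|+\log n)$ and $s=mp^2/(2C)$; then $s>2\beta|S|$ gives $N(s)\le|S|\le m$, and $m>C\max\{s,\log n\}/p^2$, so the lemma yields $\OPTS(L)<4m$ with probability at least $1-1/n$ over the interference draws. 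As $\OPTS(L)\le n$ always, $\Ex_{int}[\OPTS(L)]\le 4m+1=O(|S|+\log n)$, and since $|S|\le 2\Ex_{int}[|ALG|]$ this is exactly $O(\Ex_{int}[|ALG|]+\log n)$.

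For the second statement, $\Ex[|ALG|]\le\Ex[\OPTS(L)]$ is immediate from feasibility of $ALG$, and Theorem~\ref{T:basic} (extended to clusters by the Remark following it) gives $\Ex[\OPTS(L)]=\Theta(g_n)$ when $g_n=\Omega(\log n)$; so it remains to show $\Ex[|ALG|]=\Omega(g_n)$, for which, by the first paragraph, $\Ex[|S|]=\Omega(g_n)$ suffices. Call $i$ \emph{strong} if $\calS_i>g_n\bar\calS$; then the number $X$ of strong links is a sum of independent indicators with $\Ex[X]\in[g_n,2g_n]$. Either $|S|$ attains the value $\lfloor g_n/(2\beta)\rfloor$ while the strong links are being scanned, or it does not --- in the latter case every strong link $i$, at the moment it is scanned, has current $|S|<\lfloor g_n/(2\beta)\rfloor$, hence $\calS_i>g_n\bar\calS>2\beta|S|\bar\calS$, so it is inserted, giving $|S|\ge X$. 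Thus $|S|\ge\min\{\lfloor g_n/(2\beta)\rfloor,X\}$, and since $g_n=\Omega(\log n)\to\infty$ a Chernoff bound gives $\Pr[X\ge g_n/2]\ge 1/2$ for $n$ large; hence $\Ex[|S|]\ge\tfrac12\lfloor g_n/(2\beta)\rfloor=\Omega(g_n)$. Combining, $\Ex[|ALG|]=\Theta(\Ex[\OPTS(L)])$.

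The main obstacle is pinning down the structural claim that every link of $S$ has signal strength $\Omega(\beta|S|\bar\calS)$: this single fact is what makes the expected affectance on the survivors a constant (giving $|ALG|=\Omega(|S|)$) and, through $N(s)\le|S|$ for $s>2\beta|S|$, supplies a threshold $s$ that is small relative to the count of strong links, as required by Lemma~\ref{L:upperbound}. The mildly indirect choice $s=mp^2/(2C)$ is forced because that lemma needs the number of strong links to \emph{dominate}, not merely match, $s$; and some care with the cluster-geometry constant is needed so that the final affectance bound lies genuinely below $1/\beta$ with constant probability rather than only $O(1/\beta)$.
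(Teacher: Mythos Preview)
Your proof is correct and follows essentially the same approach as the paper: bound $\Ex_{int}[|ALG|]\ge |S|/2$ via Markov on the total interference, then invoke Lemma~\ref{L:upperbound} with $s=\Theta(|S|)$ to get $\Ex_{int}[\OPTS(L)]=O(|S|+\log n)$, and for the second part appeal to Theorem~\ref{T:basic}. Your treatment is in fact more careful than the paper's in two places: you make the scan-order argument explicit (so that \emph{every} $i\in S$ has $\calS_i>2\beta(|S|-1)\bar\calS$, not just the last one), and you supply a direct proof that $\Ex[|S|]=\Omega(g_n)$, whereas the paper simply asserts this by reference to Theorem~\ref{T:basic} even though that theorem bounds $\Ex[\OPTS(L)]$ rather than the algorithm's~$|S|$.
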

\begin{proof}
First, let us  note that $\Ex[|ALG|]\ge |S|/2$. Indeed, take a link $i\in S$. We have that $\calS_i> 2\beta\bar \calS |S|$ and $\Ex[\sum_{j\in S\setminus i}\SI_{ji}]\le |S|\bar\calS$. Thus, with probability at least $1/2$, $\sum_{j\in S\setminus i}\SI_{ji} \le 2\bar \calS |S| < \calS_i/\beta$, i.e., $i\in ALG$. By additivity of expectation, this implies that $\Ex[|ALG|]\ge |S|/2$.
 It remains to show that $\Ex[\OPTS(L)] = O(|S| + \log n)$. But this simply follows from Lemma~\ref{L:upperbound} with $s=2\beta |S|$ and an appropriate value $m =O(\max\{|S|, \log n\})$. The second part of the theorem follows from Thm.~\ref{T:basic}, which asserts that $\Ex[|S|]=\Theta(g_n)=\Omega(\log n)$.
\end{proof}

\subsection{General Equilength Sets}

The algorithm presented in the  previous section can be extended to  general sets of equilength links, which are not necessarily clusters. The essential idea is to partition such a set into clusters, solve each cluster separately, then combine the solutions. This, however, requires some technical elaboration.

Let $L$ be an arbitrary equilength set. First, we partition $L$ into a constant number of \emph{well-separated subsets}, where an equilength set $S$ is called well-separated if $S$ is a disjoint union of subsets $S_1,S_2,\dots$ such that for each $t$, $S_t$ is a cluster and for each $s\neq t$ and links $i\in S_s$ and $j\in S_t$, $\min\{d(s_i,r_j),d(s_j,r_i)\} > \ell$, where $\ell=\min_{i\in S}\{l_i\}$.

\begin{proposition}\label{P:separatedsets}
Any equilength set $L$ on the plane can be split into a constant number of well-separated subsets.
\end{proposition}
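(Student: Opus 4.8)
The plan is to exhibit an explicit coloring of the links by a constant number of colors so that each color class is well-separated, exploiting the fact that the set is equilength (so all lengths are within a factor of $2$ of $\ell=\min_i l_i$). The natural tool is a grid decomposition of the plane. First I would overlay a square grid with cell side $D\cdot\ell$ for a suitable constant $D$ to be fixed later, and assign each link $i$ to the cell containing its sender $s_i$. I then color the cells with a constant number of colors in the standard way: index cells by integer coordinates $(a,b)$ and give cell $(a,b)$ the color $(a \bmod K, b \bmod K)$ for an integer $K$ to be chosen, so there are $K^2$ colors and two cells of the same color are at (grid) distance at least $K$ apart. Within each color class, the links grouped by cell form the candidate clusters $S_1, S_2, \dots$.

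The two things that must be checked are (i) each single-cell group is actually a \emph{cluster}, and (ii) links in distinct cells of the same color are far enough apart. For (i): a cell has side $D\ell$, so the senders in one cell lie in a square of side $D\ell$; since the cluster definition requires a square of side $(\min l_i)/2 = \ell/2$ containing the senders, I need $D\le 1/2$. (The equilength condition $\Delta(L)<2$ guarantees the minimum length used in the cluster definition is within a factor $2$ of every link length in the group, which is all that is needed there.) For (ii): if $i$ and $j$ lie in cells of the same color, their cells are separated by at least $K-1$ empty rows or columns of the grid, so $d(s_i,s_j) \ge (K-1)D\ell$. To bound $d(s_j,r_i)$ from below I use the triangle inequality: $d(s_j,r_i) \ge d(s_i,s_j) - d(s_i,r_i) = d(s_i,s_j) - l_i \ge (K-1)D\ell - 2\ell$, using $l_i < 2\ell$ from equilength. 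Choosing $D=1/2$ and then $K$ a large enough constant (e.g. $K=6$, giving $(K-1)D\ell - 2\ell = 5\ell/2 - 2\ell = \ell/2$; a slightly larger $K$ gives room to get $>\ell$ cleanly) makes $\min\{d(s_i,r_j),d(s_j,r_i)\} > \ell$, as required. The total number of colors is the constant $K^2$.

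The main (and only mild) obstacle is bookkeeping the constants so that both constraints hold simultaneously: $D$ must be small enough that a cell fits inside the side-$\ell/2$ square of the cluster definition, yet the separation $(K-1)D\ell$ between same-colored cells must exceed $\ell + l_i \le 3\ell$; since $D$ is forced to be at most $1/2$, this just forces $K$ to be a slightly larger constant, which is harmless. I would also note one subtlety: the cluster definition and the well-separation definition both refer to $\ell=\min$ of link lengths over the relevant set, and since every group is a subset of the equilength set $L$, all these minima are within a factor $2$ of the global $\ell$, so replacing one by another only changes constants — which is why it suffices to run all the grid estimates against the single global value $\ell=\min_{i\in L} l_i$. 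With $D$ and $K$ fixed as above, each of the $K^2$ color classes is a disjoint union of clusters with the required pairwise separation, establishing the proposition.
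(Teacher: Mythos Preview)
Your proposal is correct and is essentially the same argument as the paper's: overlay a grid of squares of side $\ell/2$ (your $D=1/2$), assign each link to the cell containing its sender, and color cells by their coordinates modulo a constant $K$ so that same-colored cells are far apart; the paper simply fixes $K=7$, yielding $49$ well-separated color classes, and verifies the triangle-inequality estimate $d(s_i,s_j)\ge 3\ell\Rightarrow d(s_i,r_j)>\ell$ exactly as you outline. Your remark about the minor mismatch between the global $\ell$ and the per-subset minimum (absorbed by choosing $K$ a touch larger) is a fair observation, and otherwise there is nothing to add.
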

\begin{proof}
Let $\ell=\min_{i\in L}l_i$. Partition the plane into squares of side $\ell/2$ with horizontal and vertical lines. Consider only the squares intersecting the bounding rectangle of links $L$. Assign the squares integer coordinates $(x,y)$ in the following way: if two squares with coordinates $(x,y)$ and $(x',y')$ share a vertical (horizontal) edge then $y=y'$ and $|x-x'|=1$ ($x=x'$ and $|y-y'|=1$, resp.). Then split $L$ into $49$ subsets $L_{ks}$, $k,s=0,1,\dots,6$, where $L_{k,s}=\{i\in L : s_i\text{ is in a square with }(x,y)=(k,s) \mod 7\}$.
Let us fix $R=L_{k,s}$ for some arbitrary indices $s,k$ and let $i,j\in R$ be any links. It remains to note that if the sender nodes of $i$ and $j$ are in different squares, then $d(s_i,s_j)\ge 6\ell/2=3\ell$, which via the triangle inequality (and using the assumption that $L$ is equilength) implies that $d(s_i,r_j),d(s_j,r_i)>\ell$.
\end{proof}

\paragraph{Algorithm.} Partition $L$ into well-separated subsets, solve the capacity problem for each subset separately (as described below) and output the best solution obtained.  

To process a well-separated subset $L'$, observe first that
$L'$ is a disjoint union of clusters $L_1,L_2,\dots,L_m$, by definition. 
Run the algorithm from Sec.~\ref{S:basicalgo} on each cluster $L_t$ separately, obtaining a subset $S_t$. Denote $S=\cup_{t=1}^m S_t$. Let $c>0$ be a constant, as indicated in the proof of Thm.~\ref{T:mainalg}, and let $R$ be the set of all links $j$ in $S$ such that $\affS_S(j) <2c$. Partition $R$ into at most $2c\beta$ feasible  subsets (under SS) using Lemma~\ref{L:robustness} and let $ALG$ be the largest of those.

\begin{theorem}\label{T:mainalg}
Let $L$ be a set of $n$ equilength  links (arbitrarily placed on the plane) under a $p$-smooth SS distribution $\cD$, for a constant $p>0$, with $g_n=\Omega(\log n)$. Then,
\[
\Ex[|ALG|]= \Theta(\Ex[\OPTS(L)]).
\]
\end{theorem}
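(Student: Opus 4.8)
The plan is to prove the two inequalities $\Ex[|ALG|] = O(\Ex[\OPTS(L)])$ and $\Ex[|ALG|] = \Omega(\Ex[\OPTS(L)])$ separately, leveraging the per-cluster machinery of Theorem~\ref{T:colocatedalgo} together with Proposition~\ref{P:separatedsets} and the well-separatedness structure. The upper bound $\Ex[|ALG|] = O(\Ex[\OPTS(L)])$ is essentially free: $ALG$ is always a feasible set under $\cD$ by construction, so $|ALG| \le \OPTS(L)$ pointwise, hence in expectation.

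For the lower bound, I would first reduce to a single well-separated subset. Since $L$ is partitioned into a constant number $\kappa \le 49$ of well-separated subsets by Proposition~\ref{P:separatedsets}, and the algorithm outputs the best solution over these, it suffices to show that for the well-separated subset $L^\ast$ with largest expected SS-optimum we have $\Ex[|ALG(L^\ast)|] = \Omega(\Ex[\OPTS(L^\ast)])$, and to note that $\Ex[\OPTS(L)] \le \sum_t \Ex[\OPTS(L_t^{ws})] \le \kappa \cdot \Ex[\OPTS(L^\ast)]$ by subadditivity of $\OPTS$ over a partition. Then within the chosen well-separated subset $L^\ast = L_1 \cup \dots \cup L_m$ (disjoint clusters), I would argue two things. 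First, the per-cluster algorithm from Section~\ref{S:basicalgo} produces, for each cluster $L_t$, a set $S_t$ with $\Ex[|S_t|] = \Omega(\Ex[\OPTS(L_t)])$ — this is exactly the content of Theorem~\ref{T:colocatedalgo} combined with the fact that $g_n = \Omega(\log n)$ kills the additive $O(\log n)$ slack (here one must be slightly careful that $g_n$ is defined w.r.t.\ the full instance size, but the clusters have $\le n$ links, so the bound still applies, perhaps with the cluster-local quantity dominated by the global $g_n$; I would spell this out). Second, and this is the crux, I must control the \emph{cross-cluster interference} among the links in $S = \cup_t S_t$: within a cluster the algorithm already guarantees low in-cluster affectance (with constant probability per link), but links in different clusters can still interfere. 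The well-separation condition $\min\{d(s_i,r_j), d(s_j,r_i)\} > \ell$ bounds each cross-cluster interference $\GI_{ij} \le P/\ell^\alpha = \bar\calS$ (the GPL value), and more importantly a geometric packing argument shows that the total \emph{expected} GPL interference from links in other clusters onto a fixed receiver is $O(1)$ — this is the standard "bounded affectance from a well-separated set" estimate, using that at distance scale $k\ell$ from $r_i$ there are $O(k)$ clusters each contributing $O(1/k^\alpha)$ and $\alpha > 2$. Combining with the per-cluster in-cluster bound gives $\Ex_{int}[\affS_S(i)] = O(1)$ for each $i \in S$ that survived its cluster's selection; Markov then shows a constant fraction of $S$ (in expectation) lands in $R = \{j \in S : \affS_S(j) < 2c\}$, and Lemma~\ref{L:robustness} partitions $R$ into $O(1)$ feasible subsets, the largest of which is $ALG$, so $\Ex[|ALG|] = \Omega(\Ex[|S|]) = \Omega(\sum_t \Ex[|S_t|]) = \Omega(\sum_t \Ex[\OPTS(L_t)]) = \Omega(\Ex[\OPTS(L^\ast)])$, where the last step uses subadditivity in the other direction — or rather, I need $\Ex[\OPTS(L^\ast)] = O(\sum_t \Ex[\OPTS(L_t)])$, which itself requires an argument.

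The main obstacle I anticipate is precisely this last point: showing $\Ex[\OPTS(L^\ast)] = O(\sum_t \Ex[\OPTS(L_t)])$, i.e., that the SS-optimum of a well-separated union of clusters is not much larger than the sum of the per-cluster SS-optima. Unlike GPL feasibility, SS feasibility of a union does not obviously decompose, and an adversarial optimum could in principle exploit correlations — but since interferences are independent and the cross-cluster GPL interference is already $O(1)$ per link in expectation, I expect the right statement is that any SS-feasible set $T \subseteq L^\ast$ has $|T \cap L_t|$ controlled w.h.p.\ by something like $\OPTS(L_t)$ up to the additive $O(\log n)$ term (again absorbed via $g_n = \Omega(\log n)$), using a union-bound-over-subsets argument in the spirit of Lemma~\ref{L:upperbound}. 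Handling the $\Omega(\log n)$ additive terms carefully across up to $n$ clusters — making sure they don't accumulate to dominate — is the delicate bookkeeping step; the resolution is that $\Ex[\OPTS(L_t)] = \Omega(g_n) = \Omega(\log n)$ whenever cluster $L_t$ is nonempty and "active", so the additive term is always dominated by the main term on a per-cluster basis, but one should double-check that empty or tiny clusters contribute nothing and that the number of clusters with meaningful optimum is itself bounded appropriately.
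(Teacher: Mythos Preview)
Your overall architecture matches the paper's, and your upper bound is correct. But you have misidentified where the real work lies, and your cross-cluster interference argument as sketched does not go through.

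First, the step you flag as the ``main obstacle'' --- showing $\Ex[\OPTS(L^\ast)] = O\bigl(\sum_t \Ex[\OPTS(L_t)]\bigr)$ --- is immediate: any SS-feasible set $T \subseteq L^\ast$ satisfies $T \cap L_t$ feasible for every $t$ (subsets of feasible sets are feasible), so $|T| \le \sum_t \OPTS(L_t)$ holds pointwise. No union-bound-over-subsets argument is needed.

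Second, your cross-cluster interference bound is the actual gap. You claim that ``at distance scale $k\ell$ from $r_i$ there are $O(k)$ clusters each contributing $O(1/k^\alpha)$,'' but a cluster $S_t$ at distance $k\ell$ contributes $|S_t|$ interferers, and nothing in your argument controls $|S_t|$. Concretely, for a link $i \in S_s$ one has $\calS_i = \Omega(|S_s|)\,\bar\calS$, so the expected in-affectance from cluster $S_t$ is of order $|S_t|/(|S_s|\, k^\alpha)$; this can be large when $|S_t| \gg |S_s|$, and summing over clusters need not be $O(1)$. The paper's fix is to bound \emph{out}-affectance instead: from a link $i \in S_s$ into cluster $S_t$, each $j \in S_t$ has $\calS_j = \Omega(|S_t|)\,\bar\calS$, so $\sum_{j \in S_t} \GI_{ij}/\calS_j = O(1/k^\alpha)$ --- the $|S_t|$ factors cancel. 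The ring argument then gives $\sum_{j \in S \setminus i} \GI_{ij}/\calS_j = O(1)$ for every $i \in S$. Summing over $i$ yields $\sum_{i,j} \GI_{ij}/\calS_j = O(|S|)$, and only now does a pigeonhole step produce a subset $S' \subseteq S$ of size $|S|/2$ on which the \emph{in}-affectance is $O(1)$; Markov and Lemma~\ref{L:robustness} finish as you describe. Without this out-affectance--then--pigeonhole detour, the per-link in-affectance bound you assert simply fails.
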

\begin{proof} 
First, note that performing the step of partitioning into well-separated subsets and selecting the best one, we lose at most a constant factor against the optimum. So we concentrate on a well-separated set $L'$, consisting of clusters $L_1, L_2, \ldots L_m$.
Recall that  $S_1, S_2, \ldots, S_m$ be the subsets obtained by the algorithms on the clusters and $S=\cup_{t=1}^m S_t$. 

We first show that the expected number of links the algorithm chooses from $S$ is $\Omega(|S|)$, where the expectation
is only w.r.t.\ interference distributions. To that end we show that the out-affectance from any link in $S$ to all
the other links is constant under GPL.  That means that the expected affectance on a link in $S$, even under $\cD$, is
constant, which yields the claim after some sparsification of $S$.

For any two indices $s,t\in \{1,\dots,m\}$, $s\ne t$, define the \emph{representative affectance} $a(s,t)$ by set $S_s$ on set $S_t$ as the largest GPL affectance by a link in $S_s$ on a link in $S_t$: $a(s,t)=\max_{i\in S_s,j\in S_t}\affG_i(j)$. It follows from the definition of the GPL affectance and the assumption that all links have length at most $2\ell$, that $a(s,t)=O(1)\cdot (\ell/d_{st})^\alpha$, where $d_{st}=\min_{i\in S_s, j\in S_t}\{d(s_i,r_j)\}\ge \ell$ (by well-separation).
\begin{claim}\label{C:affectbound}
For every $s$, $\sum_{t=1,t\ne s}^m a(s,t)=O(1)$.
\end{claim}
\begin{proof}[Sketch]
This can be shown by an area argument that uses the fact that $\alpha>2$. Take any link $i\in S_s$. Partition the space around the sender node $s_i$ into concentric rings of width $\ell$.  By well-separation, there are no other links within distance $\ell$ from $s_i$. Number the rings starting from $s_i$. Let $a_r$ denote the total contribution of sets $S_t$ that intersect the ring number $r$. It is easy to see that $a_r=O(1/r^\alpha)\cdot z_r$, where $z_r$ is the number of such sets. Since each set $S_t$ occupies an area $O(\ell^2)$ (it is a cluster) and different sets are disjoint, a simple area argument gives that $z_r=O(r)$, so $a_r=O(1/r^{\alpha-2})$. Thus, $\sum_{t=1,t\ne s}^m a(s,t)=\sum_{r=1}^\infty a_r=O(1)$, because $\alpha>2$ and thus the last sum converges. 
\end{proof}

Now, fix an index $s$ and a link $i\in S_s$. Let $j\in S_t$, for $t\ne s$. Recall, from the definition of $S_t$, that $\calS_j = \Omega(|S_t|)\GS_j$. Thus, $\frac{\GI_{ij}}{\calS_j}=O(1/|S_t|)\cdot \affG_i(j)=O(1/|S_t|)\cdot a(s,t)$ and $\sum_{j\in S_t}\frac{\GI_{ij}}{\calS_j}=O(1/|S_t|)\cdot a(s,t)\cdot |S_t|=O(a(s,t))$. A similar argument applied to the links $j\in S_s$ gives that $\frac{\GI_{ij}}{\calS_j}=O(1/|S_s|)\cdot \affG_{i}(j)=O(1/|S_s|)$, so $\sum_{j\in S_s\setminus i}\frac{\GI_{ij}}{\calS_j}=O(1)$.

Summing over all links and using Claim~\ref{C:affectbound}, we obtain that
\begin{align*}
\sum_{j\in S\setminus i}\frac{\GI_{ij}}{\calS_j}&=
\sum_{j\in S_s\setminus i}\frac{\GI_{ij}}{\calS_j} + \sum_{t\neq s}\sum_{j\in S_t}\frac{\GI_{ij}}{\calS_j}\\
&=O(1) + O(1)\cdot \sum_{t\neq s}a(s,t)=O(1).
\end{align*}
Summing over all links $i\in S$, we obtain
$
\sum_{i,j\in S}\frac{\GI_{ij}}{\calS_j}=O(|S|).
$
By the pigeonhole principle, there is a subset $S'\subseteq S$ with $|S'|\ge |S|/2$, such that for each link $j\in S'$, $\sum_{i\in S}\frac{\GI_{ij}}{\calS_j}\le c$ for a constant $c>0$. Thus, for each link $j\in S'$, $\Ex_{int}\left[\sum_{i\in S}\frac{\SI_{ij}}{\calS_j}\right]=\sum_{i\in S}\frac{\GI_{ij}}{\calS_j}<c$. The latter implies that $\Pr[\affS_S(j) < 2c]>1/2$ and the expected number of links $j\in S'$ with $\affS_S(j) < 2c$ is at least $|S'|/2\ge |S|/4$. This is exactly the set $R$ (or $R_r$, for a particular index set $I_r$) we are looking for. Thus, by Lemma~\ref{L:robustness}, $R$ can be split into at most $2\beta c$ feasible subsets. The largest of them, which is output by the algorithm, has expected cardinality at least $|S|/(8\beta c)$.
\smallskip

It now only remains to bound $\OPTS(L)$ in terms of $|S|$. Recall that $\OPTS(L)=O(\OPTS(L'))$. We know from Thm.~\ref{T:colocatedalgo} that 
$
\Ex[|S_t|]=\Theta(\Ex[\OPTS(L_t)])
$
 for $t=1,2,\dots,m$. Combining those observations, we obtain the desired bound:
\begin{align*}
\Ex[\OPTS(L)]&= \Ex[\OPTS(L')] = O\left(\sum_{t=1}^m \OPTS(L_t)\right)\\
&=O\left(\sum_{t=1}^m (\Ex[|S_t|])\right)= O\left(\Ex[|S|]\right).
\end{align*}
\end{proof}

\subsection{General Sets}

Any set of links can be partitioned into at most $\lceil\log\Delta\rceil$ equilength subsets. Thus, we can solve the problem for each of the subsets using the algorithm from the previous section and take the largest solution obtained. That will give us, e.g., $O(\log\Delta)$ approximation for general sets of links under LNS. This holds even against an optimum that can use arbitrary power control.

\section{The Effect of Temporal Fading}\label{S:rayleigh}

In Rayleigh fading model, due to the temporal variability of signals, it is natural to define a "typical optimum" as follows. For a set $L$ of $n$ links (numbered, as before), we consider a transmission probability vector $\bar p=(p_1,p_2,\dots,p_n)$, representing that link $i$ transmits with probability $p_i\in [0,1]$. Let $Q_{\bar p}(i)$ denote the probability that link $i$ succeeds when all links transmit according to the vector $\bar p$ (as defined by the signal to interference ratio). Then, the expected number of successful transmissions is $w(L,\bar p)=\sum_{i\in L} Q_{\bar p}(i)$. Finally, we define the optimum capacity as the maximum expected number of successful transmissions with any transmission probability vector $\bar p$; namely, $\OPTR(L)=\max_{\bar p}w(L,\bar p)$. 

Here we denote the \emph{non-fading} values of signals and interferences as $\calS_i$ and $\calI_{ij}$ and assume that they can take arbitrary positive values. Also, $OPT(L)$ denotes the size of the optimal solution to {\capacity} with respect to those values. We will show that $OPT(L)$ and $\OPTR(L)$ are only constant factor apart from each other, namely, \emph{the effect of temporal fading is at most a (small) constant factor}. 

In fact, we can prove a more general result. Assume that each link $i$ has a positive weight $w_i$. The weighted variant of {\capacity} requires to find a feasible subset $S\subset L$ of links with maximum total weight $\sum_{i\in S} w_i$. We let $WOPT(L)$ denote the weight of an optimal solution in the non-fading setting. Similarly, given a transmission probability vector $\bar p$, let us re-define the expected weight of the set of successful transmissions when using $\bar p$ as $w(L, \bar p)=\sum_{i\in L} w_i Q_{\bar p}(i)$ and let $W\OPTR(L)=\max_{\bar p} w(L, \bar p)$. Then we have the following result. We assume that for every link $i$, the variables $\RS_i$ and $\RI_{ij}$ (for all $j$) are independent. 

\begin{theorem}
For any set $L$ of links, $WOPT(L) = \Theta(W\OPTR(L))$.
\label{lem:main}
\end{theorem}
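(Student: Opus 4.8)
The plan is to prove the two directions $W\OPTR(L) = O(WOPT(L))$ and $WOPT(L) = O(W\OPTR(L))$ separately, with the first being the substantive one.

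\textbf{The easy direction: $WOPT(L) = O(W\OPTR(L))$.} Let $S^*$ be an optimal feasible set in the non-fading setting, so $a_{S^*}(i) < 1/\beta$ for each $i\in S^*$ and $WOPT(L) = \sum_{i\in S^*} w_i$. Consider the transmission probability vector $\bar p$ that sets $p_i = q$ for all $i \in S^*$ and $p_i = 0$ otherwise, for a suitable constant $q \in (0,1)$ to be chosen. For a link $i\in S^*$, under Rayleigh fading the signal $\RS_i$ is exponential with mean $\calS_i$ and the interferences $\RI_{ji}$ are independent exponentials with means $\calI_{ji}$; a link $j$ contributes only if it transmits (probability $q$). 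The standard closed-form for Rayleigh SIR (as used in \cite{dams2015,cardieri}) gives, conditioned on the transmitting set $T\subseteq S^*\setminus i$,
\[
\Pr[\text{$i$ succeeds}\mid T] = \prod_{j\in T}\frac{1}{1+\beta\, \calI_{ji}/\calS_i} \ge \prod_{j\in T}\frac{1}{1+\beta\, a_j(i)} \ge \exp\Bigl(-\beta\sum_{j\in S^*\setminus i} a_j(i)\Bigr) \ge e^{-1},
\]
using $1/(1+x)\ge e^{-x}$ and feasibility $a_{S^*}(i)<1/\beta$ (this bound is independent of $T$, so it survives taking expectation over $T$). Hence $Q_{\bar p}(i)\ge q e^{-1}$ and $w(L,\bar p) \ge q e^{-1}\sum_{i\in S^*} w_i = \Omega(WOPT(L))$, giving $W\OPTR(L)=\Omega(WOPT(L))$.

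\textbf{The hard direction: $W\OPTR(L) = O(WOPT(L))$.} Fix the optimal probability vector $\bar p$ achieving $W\OPTR(L) = \sum_i w_i Q_{\bar p}(i)$. The strategy is to extract from $\bar p$ a single feasible set of comparable weight. First, for each link $i$, write $Q_{\bar p}(i) = p_i \cdot \Pr[\text{$i$ succeeds}\mid i \text{ transmits}]$; the conditional success probability again has the product form $\Ex_{T}\bigl[\prod_{j\in T}(1+\beta\calI_{ji}/\calS_i)^{-1}\bigr]$ where $T$ is the random transmitting subset of the other links. Using the same $1/(1+x)\ge e^{-x}$ bound in reverse — i.e. Jensen's inequality to pull the expectation inside, together with $\log(1+x)\ge x - x^2/2$ or a cruder bound $\prod(1+x_j)^{-1}\le \exp(-\sum x_j/(1+x_j))$ — one gets that $Q_{\bar p}(i)$ is small unless the \emph{expected} weighted affectance $\sum_{j} p_j a_j(i)$ is bounded. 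More precisely, the key lemma to establish is: if $Q_{\bar p}(i) \ge p_i/(2e)$, say, then $\sum_{j\ne i} p_j a_j(i) = O(1)$. This is where the Rayleigh structure is essential: the multiplicativity of the success probability over independent interferers converts a sum-of-interference (SINR) condition into a product, which linearizes cleanly under logarithms. Let $L' = \{i : Q_{\bar p}(i) \ge p_i/(2e)\}$; then $\sum_{i\in L'} w_i p_i \ge \frac{2e}{2e-1}\sum_{i\in L'} w_i Q_{\bar p}(i) $ is off from $W\OPTR(L)$ by only the contribution of links with small conditional success — but for those, $Q_{\bar p}(i) < p_i/(2e) \le 1/(2e)$, and a separate argument (or absorbing them) is needed; more simply, restrict attention to $L'$ and show $\sum_{i\in L'} w_i p_i = \Omega(W\OPTR(L))$, which holds because on $L\setminus L'$ we have $Q_{\bar p}(i) < p_i/(2e)$ so $\sum_{L\setminus L'} w_i Q_{\bar p}(i) \le \frac1{2e}\sum w_i p_i \le \frac1{2e}\sum w_i Q_{\bar p}(i)\cdot(\text{const})$... this requires care, so instead I would directly show $\sum_{i\in L'} w_i Q_{\bar p}(i) \ge \frac12 W\OPTR(L)$ by noting $L\setminus L'$ contributes at most $\frac1{2e}\sum_i w_i p_i$ and then bounding $\sum_i w_i p_i$ against $W\OPTR(L)$ via an averaging/fractional-relaxation argument.

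\textbf{From fractional weight to an integral feasible set.} Having reduced to a vector $\bar p$ supported on $L'$ with $\sum_{i\in L'} p_j a_j(i) = O(1)$ for every $i \in L'$ and $\sum_{i\in L'} w_i p_i = \Omega(W\OPTR(L))$, I would interpret $\{p_i\}$ as a fractional solution. Consider the bipartite-like affectance structure: the constraint is "$\sum_j p_j a_j(i) \le c$ for all $i$". Pick a random subset $X \subseteq L'$ including each $i$ independently with probability $p_i / K$ for a large constant $K$. Then for each $i\in X$, $\Ex[a_X(i)] = \frac1K\sum_j p_j a_j(i) = O(1/K) < \frac1{2\beta}$ for $K$ large, and by Markov $a_X(i) < 1/\beta$ with probability $\ge 1/2$ conditioned on $i\in X$ (using independence of the inclusion of $i$ from that of the others). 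Let $X'$ be the links of $X$ that are "good"; $\Ex[\sum_{i\in X'} w_i] \ge \frac12 \sum_i w_i p_i/K = \Omega(W\OPTR(L))$, and $X'$ is feasible, so $WOPT(L) = \Omega(W\OPTR(L))$. The main obstacle is the key lemma of the hard direction — cleanly converting "$Q_{\bar p}(i)$ not too small" into "expected weighted affectance on $i$ is $O(1)$" — which hinges on exploiting the exact product form of the Rayleigh success probability and handling the expectation over the random transmitting set (via Jensen, since $x\mapsto \prod(1+x_j)^{-1}$ is convex-ish in the right sense, or by a direct second-moment estimate); the subsequent fractional-rounding step is standard given Lemma~\ref{L:robustness}-style independence arguments.
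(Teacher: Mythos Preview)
Your overall architecture matches the paper's: the easy direction via the Rayleigh product formula, restriction to links with bounded expected affectance, and probabilistic rounding of the fractional vector to a feasible set. The rounding step you describe is essentially identical to what the paper does.

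The genuine gap is in the middle step: showing that the low-affectance set $L'$ (or any set on which $\sum_j p_j a_j(i)=O(1)$) carries a constant fraction of $W\OPTR(L)$. You propose to do this by bounding $\sum_{i\in L\setminus L'} w_i Q_{\bar p}(i) \le \frac{1}{2e}\sum_i w_i p_i$ and then ``bounding $\sum_i w_i p_i$ against $W\OPTR(L)$ via an averaging/fractional-relaxation argument.'' But there is no such bound available a priori: $\sum_i w_i p_i$ can be arbitrarily larger than $\sum_i w_i Q_{\bar p}(i)$ for a generic probability vector (take many heavily interfering links all with $p_i=1$), and nothing you have written uses the \emph{optimality} of $\bar p$, which is the only tool that can rule this out. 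Your own hedging (``this requires care'') correctly flags the hole, but the patch you suggest does not close it.

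The paper fills this gap with a scaling argument that is the real idea of the proof. Using the two-sided estimate $q_i e^{-A_{\bar q}(i)} \le Q_{\bar q}(i) \le q_i e^{-A_{\bar q}(i)/2}$, define $H=\{i: A_{\bar q}(i)\le k\}$ for a suitable constant $k$. If $H' = L\setminus H$ carried more than a third of the weight, replace $\bar q$ by $\bar q/4$: the expected affectance on each link drops by a factor $4$, and for $i\in H'$ (where $A_{\bar q}(i)>k$) the lower bound on $Q_{\bar q/4}(i)$ beats the upper bound on $Q_{\bar q}(i)$ by a factor strictly greater than $1$, enough to make $w(L,\bar q/4) > w(L,\bar q)$ and contradict optimality. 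This variational use of optimality is exactly what your sketch is missing; once you have it, your rounding step goes through unchanged.
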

Dams et al.\ \cite{dams2015} have already shown that $WOPT(L) = O(W\OPTR(L))$. Specifically, given a feasible set $S$, if we form the vector $\bar{p}$ by $p_i = 1$ if $l_i \in S$ and $p_i=0$ otherwise, then $w(L,\bar{p}) \ge |S|/e$.
Thus, it remains to show that $WOPT(L) = \Omega(W\OPTR(L))$.

\begin{proof}
Let $\bar{q}$ be the probability vector corresponding to $W\OPTR(L)$, namely, $\bar q$ is such that $W\OPTR(L) =  w(L,\bar q)=\sum_{i \in L}  w_i Q_{\bar{q}}(i)$.

Note that within the confines of the argument, $\bar{q}$ is a deterministic vector, not a random variable.
Let $A_{\bar{q}}(j) = \sum_{i \in L} q_i \cdot a_i(j)$ denote the \emph{expected affectance} on link $j$.

Dams et al.\ \cite[Lemma 2]{dams2015} applied a characterization of Li and Haenggi \cite{Liu05eurasip}
to obtain that for each link $i$,
\begin{equation}
q_i e^{-A_{\bar{q}}(i)} \le Q_{\bar{q}}(i)) \le q_i e^{-A_{\bar{q}}(i)/2}\ .
\label{eqn:qbnd}
\end{equation}

The idea of our proof is to consider only nodes with small expected affectance under $\bar{q}$; 
they can be sparsified to a feasible set of large size.
The key observation is that much of the weight in the optimal  solution with fading is centered on those low-affectance nodes; 
otherwise, one could obtain a larger solution by uniformly reducing the probabilities.

Let $k = 4 \cdot \ln 11/2 = 4 (\ln 4 + \ln 11/8) \sim 6.819$.
Let $H = \{ i\in L :  A_{\bar{q}}(i) \le k\}$ be the links of small weighted affectance
and let $H' = L \setminus H$.

\begin{claim}
$w(H,\bar{q}) \ge \frac{2}{3} w(L,\bar{q}) = \frac{2}{3}W\OPTR(L)$.
\label{claim:lowaff}
\end{claim}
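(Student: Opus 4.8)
The plan is to argue by contradiction: suppose that the "heavy" links $H'$ carry more than a third of the weighted success mass, i.e. $w(H',\bar q) > \frac13 w(L,\bar q)$. I would then show that scaling the probability vector $\bar q$ down by a constant factor, say replacing $\bar q$ by $\bar q/2$, strictly increases the objective $w(L,\cdot)$, contradicting optimality of $\bar q$. The intuition is exactly the one flagged in the paragraph preceding the claim: on a link $i$ with large expected affectance $A_{\bar q}(i)$, the factor $e^{-A_{\bar q}(i)/2}$ in the lower bound of \eqref{eqn:qbnd} is very small, so link $i$ contributes almost nothing; halving all probabilities halves $A_{\bar q}(i)$, which multiplies the success probability bound by roughly $e^{+A_{\bar q}(i)/4}$, a big gain, at the cost of only a factor-$2$ loss in the leading $q_i$ term and only a constant-factor loss on the light links.

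Concretely, let $\bar q' = \bar q/2$, so $A_{\bar q'}(i) = A_{\bar q}(i)/2$ and $q_i' = q_i/2$ for every $i$. Using the two sides of \eqref{eqn:qbnd} I would estimate $w(L,\bar q')$ from below termwise. For a link $i \in H'$ (so $A_{\bar q}(i) > k$), the lower bound gives
\[
w_i Q_{\bar q'}(i) \ge w_i q_i' e^{-A_{\bar q'}(i)} = \tfrac12 w_i q_i e^{-A_{\bar q}(i)/2},
\]
while the upper bound in \eqref{eqn:qbnd} gives $w_i Q_{\bar q}(i) \le w_i q_i e^{-A_{\bar q}(i)/2}$; hence on $H'$ we do not even lose: $w(H',\bar q') \ge \tfrac12 w(H',\bar q)$ — but that alone is not enough, so the real gain must come from comparing $w_i q_i' e^{-A_{\bar q'}(i)}$ with $w_i Q_{\bar q}(i) \le w_i q_i e^{-A_{\bar q}(i)/2}$, which yields a multiplicative improvement of $\tfrac12 e^{A_{\bar q}(i)/4}$; since $A_{\bar q}(i) > k$ this factor exceeds $\tfrac12 e^{k/4} = \tfrac12 \cdot 4 \cdot (11/8)^{1/2} > 2$ by the choice of $k$. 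On the light links $H$, the worst case is $w(H,\bar q') \ge \tfrac12 w(H,\bar q) \cdot e^{-A_{\bar q'}(i)}/e^{-A_{\bar q}(i)}\big|_{\text{worst}}$; more cleanly, $w_i Q_{\bar q'}(i) \ge \tfrac12 w_i q_i e^{-A_{\bar q}(i)/2} \ge \tfrac{1}{2e^{k/2}} w_i Q_{\bar q}(i)\cdot(\text{const})$... — I would instead just use the crude bound $w(H,\bar q') \ge \tfrac12 e^{-k/2} w(H,\bar q) \ge 0$ and not worry about it, since the point is only that we never lose more than we gain.

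Putting the pieces together: assuming $w(H',\bar q) > \tfrac13 w(L,\bar q)$, we get
\[
w(L,\bar q') \ge w(H',\bar q') \ge 2\, w(H',\bar q) > \tfrac23 w(L,\bar q) + w(H',\bar q) - \tfrac13 w(L,\bar q) \ge \dots
\]
and I would arrange the constants (this is exactly why $k = 4\ln(11/2)$, not just any large constant) so that $w(L,\bar q') > w(L,\bar q)$, contradicting $w(L,\bar q) = W\OPTR(L) = \max_{\bar p} w(L,\bar p)$. Therefore $w(H',\bar q) \le \tfrac13 w(L,\bar q)$, equivalently $w(H,\bar q) \ge \tfrac23 w(L,\bar q)$, as claimed.

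**Main obstacle.** The delicate part is the bookkeeping of constants so that the gain on $H'$ strictly dominates the loss on $H$ under the exact threshold $k = 4\ln(11/2)$; one has to be careful that the lower bound $q_i e^{-A_{\bar q}(i)}$ (not $e^{-A_{\bar q}(i)/2}$) is used for $\bar q'$ while the upper bound is used to cap $Q_{\bar q}(i)$ — mismatching these exponents is the easy way to get the wrong constant. A secondary subtlety is that $\bar q'$ must still be a valid probability vector, which is immediate since $q_i' = q_i/2 \in [0,1]$, and that $w(L,\bar q')$ is genuinely eligible in the maximum defining $W\OPTR(L)$.
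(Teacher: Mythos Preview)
Your overall strategy matches the paper's: assume $w(H',\bar q)$ is too large, scale $\bar q$ down by a constant, and show the rescaled vector beats $\bar q$, contradicting optimality. However, your choice of scaling factor $2$ does not work, and the gain factor $\tfrac12 e^{A_{\bar q}(i)/4}$ you claim for $i\in H'$ is an arithmetic slip. With $\bar q' = \bar q/2$ you have $A_{\bar q'}(i)=A_{\bar q}(i)/2$, so the lower bound in (\ref{eqn:qbnd}) gives
\[
Q_{\bar q'}(i) \;\ge\; \tfrac{q_i}{2}\,e^{-A_{\bar q}(i)/2},
\]
while the upper bound gives $Q_{\bar q}(i)\le q_i e^{-A_{\bar q}(i)/2}$; the ratio is exactly $\tfrac12$, independently of $A_{\bar q}(i)$. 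There is no exponential gain: the factor-of-two gap between the exponents $-A$ and $-A/2$ in the two sides of (\ref{eqn:qbnd}) is precisely cancelled by halving the vector. In particular, your subsequent assertion $w(H',\bar q')\ge 2\,w(H',\bar q)$ is unsupported, and the contradiction does not close.

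The paper repairs this by scaling by $4$: with $\bar p=\bar q/4$ one gets $Q_{\bar p}(i)\ge \tfrac{q_i}{4}e^{-A_{\bar q}(i)/4}$, and comparing with $Q_{\bar q}(i)\le q_i e^{-A_{\bar q}(i)/2}$ now yields a ratio $\tfrac14\, e^{A_{\bar q}(i)/4}$, which for $A_{\bar q}(i)>k=4\ln(11/2)$ exceeds $\tfrac{11}{8}$. The paper also handles $H$ via the clean bound $Q_{\bar p}(i)\ge \tfrac14 Q_{\bar q}(i)$ (since $e^{-A/4}\ge e^{-A/2}$), rather than discarding it as you propose; a nontrivial lower bound on $w(H,\bar p)$ is needed to close the inequality $w(L,\bar p)>w(L,\bar q)$. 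Your plan becomes correct once you replace the scaling by $2$ with scaling by $4$ and carry the constants through as the paper does.
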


\begin{proof}
Suppose otherwise, so $w(H',\bar{q}) > \frac{2}{3}w(L,\bar{q})$.
Form a new probability vector $\bar{p} = \bar{q}/4$, i.e., $p_i = q_i/4$, for all $i\in L$.
Then, $A_{\bar{p}}(i) = A_{\bar{q}}(i)/4$, for all $i\in L$.
For $i\in H'$, since $A_{\bar{q}}(i) > k$, it holds from the definition of $k$ that
\begin{equation}
 -\frac{A_{\bar{q}}(i)}{4} - \ln 4 \ge -\frac{A_{\bar{q}}(i)}{2} + \ln 11/8\ . 
\label{eqn:rel}
\end{equation}
Thus, for $i\in H'$, 
\begin{align*} 
Q_{\bar{p}}(i) &\ge \frac{q_i}{4} e^{-A_{\bar{p}}(i)}
  = q_i e^{-A_{\bar{q}}(i)/4 - \ln 4} \\
  &\ge q_i e^{-A_{\bar{q}}(i)/2 + \ln 11/8} 
  = \frac{11}{8} q_i e^{-A_{\bar{q}}(i)/2}
  \ge \frac{11}{8} Q_{\bar{q}}(i)\ , 
	\end{align*}
using (\ref{eqn:qbnd}), the definition of $\bar{p}$, 
(\ref{eqn:rel}), rearrangement, and again (\ref{eqn:qbnd}).
Thus, $w(H',\bar{p}) \ge \frac{11}{8} w(H',\bar{q})$.
It follows that
\begin{align*} 
w(L,\bar{p}) &= w(H,\bar{p}) + w(H',\bar{p}) 
\ge \frac{1}{4} w(H,\bar{q}) + \frac{11}{8} w(H',\bar{q})\\
&= \frac{1}{4} w(L,\bar{q}) + \frac{9}{8} w(H',\bar{q}) >  w(L,\bar{q}), 
\end{align*}
where the strict inequality uses the supposition.
This contradicts the choice of $\bar{q}$. Hence, the claim follows.
\end{proof}

We continue with the proof of Theorem \ref{lem:main}.
We use the probabilistic method to show that there is a feasible subset of $H$ of weight at least $W\OPTR(L)/(6k)$.

Let $X$ be a random set of links from $H$, where link $i\in H$ is added to $X$ with probability $q_i/(2k)$. 
Consider a link $t\in H$ and let $Y_t = a_X(t)$ be the random variable whose value is the affectance of set $X$ on link $t$.
By definition of $H$, $A_{\bar{q}}(t) \le k$.
Observe that 
\begin{align*}
 \Ex[Y_t] &= \Ex[a_X(t)] = \sum_{j \in H} \Pr[j \in X] \cdot a_j(t)\\
  &= \frac{1}{2k} \sum_{j \in H} q_j \cdot a_j(t)
     \le \frac{A_{\bar{q}}(t)}{2k} \le \frac{1}{2} \ , 
		\end{align*}
using the linearity of expectation and the definition of $A_{\bar{q}}$.
Thus, by Markov's inequality, 
\begin{equation}
 \Pr[Y_t \le 1]\, \ge \,\Pr[Y_t \le 2\cdot\Ex[Y_t]] \,\ge \,\frac{1}{2}\ .
\label{eq:markov}
\end{equation}
Observe that the value of $Y_t$ is independent of the event that link $t$ was selected into $X$.
Let $S_X = \{i \in X : a_X(i) \le 1 \}$ be the subset of feasible links in $X$.
It follows that 
\begin{align*}
\Ex[\sum_{i\in S_X}w_i] & = \sum_{i \in L} w_i \Pr[i \in X \text{ and } Y_i \le 1] & \\
   & \ge \sum_{i \in H} w_i \Pr[i \in X] \cdot \Pr[Y_i \le 1]    && \text{(Independence)} \\
   & \ge \sum_{i \in H} w_i \cdot \frac{q_i}{2k} \cdot \frac{1}{2} && \text{((\ref{eq:markov}) and defn.\ of $X$)} \\
   & \ge \frac{1}{4k} \sum_{i \in H} w_i Q_{\bar{q}}(i)              && \text{((\ref{eqn:qbnd}))} \\
   & = \frac{w(H,\bar{q})}{4k} && (\text{Defn.\ of $w$}) \\
   & \ge \frac{W\OPTR(L)}{6k} && \text{(Claim \ref{claim:lowaff}}) \ .
\end{align*}
As per the probabilistic method, this implies there
exists a feasible set of size $\Omega(W\OPTR(L))$.
\end{proof}

\section{Conclusions}

We have compared link capacity in instances with and without stochastic shadowing, as well as with and without temporal fading.
We have also obtained constant-factor approximations in both cases.
Numerous open problems and directions still remain.

There is room to strengthen and generalize our results.
This includes extending the approximations for shadowing to links of unbounded length diversity,
and extending the temporal fading analysis to  other stochastic models.

Modeling correlations and analyzing its effect on link capacity would be valuable,
and the same holds for analyzing other scheduling problems, including weighted link capacity and shortest length schedules.

\mypara{Acknowledgements} This work was supported by grants 152679-05 and 174484-05 from the Icelandic Research Fund.
We thank Michael Neely for pointing out an error in an earlier version of the paper.

\bibliographystyle{abbrv}
\bibliography{Bibliography}

\end{document}